\documentclass[lettersize,journal]{IEEEtran}
\usepackage{amsmath,amsfonts}
\usepackage{algorithmic}
\usepackage{algorithm}
\usepackage{array}
\UseRawInputEncoding
\usepackage{textcomp}
\usepackage{stfloats}
\usepackage{url}
\usepackage{verbatim}
\usepackage{graphicx}
\usepackage{amssymb} % \triangleq
\usepackage{amsfonts}
\usepackage{amsmath} % align
\usepackage{physics} % 长竖线 
\usepackage{algorithmic}  % 算法框架
\usepackage{algorithm} 
\usepackage{graphicx}
\usepackage{color} % 字体颜色
\usepackage{bm}
\usepackage{epstopdf} 
\usepackage{subfigure}
\usepackage{booktabs}  % table
\usepackage[dvipsnames]{xcolor} % 更多字体颜色
\usepackage{cite} % 多项引用
\usepackage{balance}
\usepackage{setspace}  % 改变行距
\usepackage{amsthm}
\usepackage{amsmath} 
\usepackage{hyperref}
\usepackage{balance}
\usepackage{float}

\newtheoremstyle{note}
{3pt}%
{3pt}%
{}%
{\parindent}%
{\rmfamily \bfseries}%
{:}%
{5pt}%
{}%

\newtheorem{ppn}{Proposition}
 
\newtheorem{remark}{Remark}

\newtheorem*{pf}{Proof}  

\newcommand{\algref}[1]{\textbf{Algorithm \ref{#1}}}
\newcommand{\tabref}[1]{\textbf{Table \ref{#1}}}

\newcommand{\figref}[1]{Fig. \ref{#1}}
\newcommand{\secref}[1]{Section \ref{#1}}
\DeclareMathAlphabet{\mathsfit}{\encodingdefault}{\sfdefault}{m}{sl}
\SetMathAlphabet{\mathsfit}{bold}{\encodingdefault}{\sfdefault}{bx}{n}
\newcommand{\tens}[1]{\bm{\mathsfit{#1}}}
\def\tA{{\tens{A}}}
\def\tB{{\tens{B}}}
\def\tb{{\tens{b}}}
\def\tC{{\tens{C}}}

\def\tH{{\tens{H}}}

\def\tR{{\tens{R}}}

\def\tT{{\tens{T}}}
\def\tU{{\tens{U}}}

\def\tX{{\tens{X}}}

\begin{document}
	
	\title{Deep Learning-Based Multi-Satellite Massive MIMO Transmission: Centralized or Decentralized?}
	
	\author{Wenjing Cao, \textit{Graduate Student Member}, \textit{IEEE}, Yafei Wang, \textit{Graduate Student Member}, \textit{IEEE}, \\Jinshuo Zhang, \textit{Graduate Student Member}, \textit{IEEE},
      Xiaofan Xu, Wenjin Wang, \textit{Member}, \textit{IEEE},\\
   Symeon Chatzinotas, \textit{Fellow}, \textit{IEEE}, Bj{\"o}rn Ottersten, \textit{Fellow}, \textit{IEEE}% <-this % stops a space
		\thanks{Manuscript received xxx; revised xxx.  \textit{}}
		\thanks{Wenjing Cao, Yafei Wang, Jinshuo Zhang, and Wenjin Wang are with the National Mobile Communications Research Laboratory, Southeast University, Nanjing 210096, China, and also with Purple Mountain Laboratories, Nanjing 211100, China (e-mail: caowj@seu.edu.cn; wangyf@seu.edu.cn; zhangjs@seu.edu.cn; wangwj@seu.edu.cn).}
		\thanks{Xiaofan Xu is with the Shanghai Satellite Network Research Institute Co., Ltd., Shanghai 200120, China; Shanghai Key Laboratory of Satellite Network, Shanghai 200120, China; State Key Laboratory of Satellite Network, Shanghai 200120, China (e-mail: xiaofanxu@sina.com).}\thanks{Symeon Chatzinotas and Bj{\"o}rn Ottersten are with the Interdisciplinary Centre for Security, Reliability and Trust (SnT), University of Luxembourg, Luxembourg (e-mail: symeon.chatzinotas@uni.lu;  bjorn.ottersten@uni.lu).}}
	
	% The paper headers
	\markboth{}%
	{Shell \MakeLowercase{\textit{et al.}}: A Sample Article Using IEEEtran.cls for IEEE Journals}
	
	%\IEEEpubid{0000--0000/00\$00.00~\copyright~2021 IEEE}
	% 页尾
	% Remember, if you use this you must call \IEEEpubidadjcol in the second
	% column for its text to clear the IEEEpubid mark.
\maketitle
\begin{abstract}
This paper investigates new efficient transmission architectures for multi-satellite massive multiple-input multiple-output (MIMO). We study the weighted sum-rate maximization problem in a multi-satellite system where multiple satellites transmit independent data streams to multi-antenna user terminals, \textcolor{black}{thereby achieving higher throughput.} We first adopt a multi-satellite weighted minimum mean square error (WMMSE) formulation under statistical channel state information (CSI), which yields closed-form updates for the precoding and receive vectors. To overcome the high complexity of optimization, we propose a learning-based WMMSE design that integrates tensor equivariance with closed-form recovery, enabling inference with near-optimal performance without iterative updates. Moreover, to reduce inter-satellite signaling overhead incurred by exchanging CSI and precoding vectors in centralized coordination, we develop a decentralized multi-satellite transmission scheme in which each satellite locally infers its precoders rather than receiving from the central satellite. The proposed decentralized scheme leverages periodically available satellite state information, such as orbital positions and satellite attitude, which is inherently accessible in satellite networks, and employs a dual-branch tensor-equivariant network to predict the precoders at each satellite locally. Numerical results demonstrate that the proposed multi-satellite transmission significantly outperforms single-satellite systems \textcolor{black}{in sum rate}; the decentralized scheme achieves \textcolor{black}{sum-rate} performance close to the centralized schemes while substantially reducing computational complexity and inter-satellite overhead; and the learning-based schemes exhibit strong robustness and scalability across different scenarios. 
\end{abstract}
\begin{IEEEkeywords}
Satellite communication, massive MIMO, decentralized transmission, tensor equivariance.
\end{IEEEkeywords}

  \begin{comment}  \begin{IEEEkeywords}
Multi-satellite massive MIMO, learning-based WMMSE, tensor equivariance, decentralized transmission, inter-satellite overhead.
	\end{IEEEkeywords}\end{comment}

	% For peer-reviewed papers, you can put extra information on the cover
	% page as needed:
	% \ifCLASSOPTIONpeerreview
	% \begin{center} \bfseries EDICS Category: 3-BBND \end{center}
	% \fi
	%
	% For review papers, this IEEEtran command inserts a page break and
	% creates the second title. It will be ignored for other modes.
	\IEEEpeerreviewmaketitle

\vspace{-5mm}\section{Introduction}
\IEEEPARstart{S}{atellite} communications are widely regarded as a key enabling technology for future sixth-generation (6G) wireless networks \cite{WANG202542}. By complementing terrestrial infrastructures, satellite systems can provide seamless global coverage, extend network coverage to remote, oceanic, and underserved areas where terrestrial deployments are costly \cite{Perez2019}. Therefore, satellite communication plays an important role in achieving ubiquitous connectivity, which is one of the central visions of IMT-2030 \cite{cao2025, cao2025_vtc, Ntontin2025}. 

The demand for global coverage and high aggregate capacity has led to the emergence of unprecedentedly large low-earth orbit (LEO) mega-constellations, facilitated by satellite miniaturization and reusable launch vehicles that substantially reduce deployment costs \cite{Christie2018, Gill2023, Liu2023}. Meanwhile, satellite networks are evolving not only toward denser constellations, but also toward larger antenna apertures and more capable user terminals (UTs). Recent platforms such as BlueWalker-3 and Starlink's V2 Mini indicate a trend toward satellites equipped with large-scale antenna arrays for high-gain beamforming \cite{Ntontin2024}, while UTs are increasingly adopting \textcolor{black}{steerable multi-antenna architectures with high spatial directivity and gain} \cite{Amendola2023, WANG202542}. Together, these trends pave the way for multi-satellite massive multiple-input multiple-output (MIMO), where large-scale antenna arrays at satellites and multiple antennas at UTs jointly enhance spatial multiplexing capabilities, improve satellite link budget, and manage interference.

Precoding has emerged as one of the most effective techniques for such massive MIMO satellite communication systems, particularly for improving spectral efficiency, mitigating interference, and enhancing the overall user experience \cite{Khammassi2024}. By leveraging advanced satellite payloads and large-scale antenna arrays, precoding in orthogonal frequency-division multiplexing (OFDM) satellite systems is applied on a per-subcarrier basis to fully exploit spatial multiplexing gains, enabling the reuse of the same time-frequency resources while efficiently serving \textcolor{black}{many} users.
Nevertheless, single-satellite transmission in LEO constellations is fundamentally constrained by limited payload and severe inter-satellite interference arising from dense deployment and aggressive spectrum reuse. These challenges motivate the adoption of multi-satellite cooperative transmission to manage joint user service and interference \cite{WANG_2026MSCT, Shang2026}.
Moreover, the present deployment of multi-satellite precoding often relies on centralized coordination, which requires extensive inter-satellite information exchange through inter-satellite links (ISLs). However, establishing and maintaining reliable ISLs can be challenging in large LEO constellations. In particular, satellites in extreme scenarios, such as those located in counter-rotating orbital seams or different orbital planes, may experience large relative motion, making ISL establishment challenging and causing link instability \cite{Chu2025, Beatriz2019}. Consequently, developing highly efficient and scalable precoding schemes for decentralized multi-satellite transmission that reduce reliance on ISLs becomes increasingly important for practical multi-satellite massive MIMO systems.
\vspace{-4mm}\subsection{Related Works}
Extensive studies have investigated precoding for single-satellite systems. Conventional schemes, including MMSE \cite{Peel2005}, WMMSE \cite{Christensen2008}, and energy-efficient designs \cite{Wu2023}, generally require perfect instantaneous CSI (iCSI), which is difficult to obtain in satellite systems due to long propagation delays and high mobility. This has motivated the development of statistical CSI (sCSI)-based precoding methods. For instance, \cite{You2020} exploited angular and power statistics, \cite{Liu2022} addressed angular uncertainty with per-antenna power constraints, and \cite{Wu2023} proposed a low-complexity precoding-update algorithm to combat temporal channel dynamics. Learning-based precoding has also emerged as a promising direction for improving performance, efficiency, and adaptability \cite{Fontanesi2026}. Representative works include learning simplified scalar parameters for low-complexity precoding \cite{Li2022} and hybrid CNN-LSTM architectures for improved robustness to Doppler shifts and delay distortions \cite{Ying2024}. \cite{Gao2022} mapped implicit CSI to precoding, avoiding explicit channel reconstruction and reducing pilot overhead.

As satellite communication systems evolve toward dense LEO networks with increasing ISL deployment, cooperative multi-satellite transmission is becoming essential \cite{Shang2026, Abdelsadek2022, Wu2025, WANG_2026MSCT}. By allowing multiple satellites to jointly serve users, such architectures can enhance link performance, suppress inter-satellite interference, and improve spatial resource utilization \cite{Bakhsh2025, Chen2024}. Existing studies have mainly focused on centralized architectures \cite{Xiang2024, wang2026}, where a central satellite aggregates multi-satellite CSI and jointly optimizes precoding vectors. \cite{Wu2025} investigated cooperative downlink transmission via distributed beamforming among multiple LEO satellites, analyzing the impact of inevitable delay and Doppler compensation errors on coherent processing. Building on this line of research, \cite{wang2026} proposed a scalable transformer-based precoding scheme for coherent transmission that learns from sCSI and can adapt to varying numbers of users and satellites. More recently, \cite{wang2025MSMS} extended multi-satellite coherent transmission to the multi-antenna UT setting with multi-stream transmission. In contrast, multi-satellite non-coherent (NCT) transmission was only preliminarily studied in \cite{Xiang2024}. Owing to its much lower synchronization requirements, NCT transmission constitutes a potentially more practical and deployable solution for dynamic LEO networks \cite{Shang2026}. However, existing designs still mainly rely on centralized optimization and high-dimensional CSI exchange, which incur substantial computational complexity and inter-satellite overhead, thereby limiting their practical applicability.

To overcome the limitations of centralized architectures, decentralized multi-satellite transmission has attracted increasing attention. In such architectures, each satellite performs local precoding while exchanging only limited state information, thus substantially reducing inter-satellite communication overhead \cite{Zhang2024}. Although this decentralized operation may incur some performance loss relative to fully centralized optimization, it is often more practical in satellite networks due to inter-satellite and satellite–terrestrial delays, as well as limited onboard resources \cite{Rodrigues2023}.
Recent progress includes privacy-preserving split learning \cite{Sun2024}, multi-agent reinforcement learning for cooperative scheduling \cite{Martinez2025}, and decentralized federated learning for adaptive model exchange and local training \cite{Cheng2025}. Nevertheless, existing studies mainly address distributed learning and resource management, with limited attention to physical-layer design for multi-satellite massive MIMO systems. In particular, decentralized transmission strategies built on consensus mechanisms within a unified operational framework remain largely unexplored.
\vspace{-4mm}\subsection{Contributions}
Despite the promise of multi-satellite massive MIMO transmission in improving link performance and suppressing inter-satellite interference \cite{Abdelsadek2022,Bakhsh2025}, several important gaps remain. Existing learning-based designs have been limited to systems with single-antenna UTs \cite{wang2026}, while the distinctive properties of multi-antenna UT scenarios have not been fully exploited \cite{Xiang2024,Zhang2026}. Meanwhile, multi-satellite NCT transmission, in which multiple satellites transmit independent data streams to the same multi-antenna UT, offers greater practical potential owing to its lower synchronization requirements \cite{Shang2026,wang2026,wang2025MSMS}, yet still suffers from the high computational complexity of conventional iterative schemes \cite{Xiang2024}. Furthermore, centralized multi-satellite transmission is fundamentally constrained by limited ISL communication, including stringent requirements on information timeliness and restricted exchange capacity.
These constraints are further aggravated by the difficulty of acquiring CSI and the overhead associated with precoder dissemination \cite{Chu2025,Shang2026}. As a result, decentralized learning-based NCT transmission for multi-satellite systems remains largely unexplored.
These observations motivate the following fundamental research question: \textit{\textcolor{black}{How to design centralized or decentralized NCT transmission schemes for cooperative multi-satellite systems by leveraging deep learning techniques?}} Addressing this question constitutes the central focus of this paper. In summary, the main contributions of this work are as follows:

\begin{itemize}  
\begin{comment}\item 
For centralized multi-satellite massive MIMO transmission, we first formulate the weighted sum rate (WSR) maximization problem and reformulate it as a multi-satellite WMMSE problem. Under per-satellite power constraints, we derive closed-form iterative updates for the transmit precoding and receive vectors, and the coupled power constraints are reduced to a low-dimensional Lagrange multiplier optimization, enabling efficient computation. Then, we design a centralized alternating optimization algorithm that serves both as an optimization-based performance scheme and as the basis for the subsequent learning-based approaches.
\end{comment}
\item To mitigate the high complexity of iterative optimization, we extract physically interpretable low-dimensional variables from the closed-form WMMSE solutions. Leveraging this structure, we propose a centralized learning-based WMMSE framework that incorporates a tensor equivariant network designed to capture and preserve the permutation-equivariant structure across both satellite and UT dimensions. The low-dimensional variables are predicted and then mapped back to the precoders through a closed-form recovery. As a result, the proposed scheme achieves near-WMMSE performance with substantially lower online inference complexity while naturally adapting to different numbers of satellites and UTs.

\item To address the excessive inter-satellite overhead and latency of centralized coordination, we first establish the feasibility of decentralized multi-satellite transmission from the structure of the closed-form WMMSE solution. Specifically, we show that each satellite can predict its local precoding vectors primarily from local CSI, while the cross-satellite coupling can be captured through periodically exchanged low-rate satellite state information (SSI), including orbital positions and satellite attitude. Motivated by this observation, we further propose a decentralized multi-satellite transmission architecture that eliminates the need for centralized CSI aggregation and precoder dissemination, thereby significantly reducing inter-satellite communication overhead.

\item Building on the proposed decentralized architecture, we further develop a learning-based decentralized WMMSE scheme. Specifically, a dual-branch tensor equivariant network is designed to process local-satellite and other-satellite features separately, while preserving the permutation-equivariant and permutation-invariant structures inherent in the satellite and UT dimensions of the decentralized inference problem. The network predicts low-dimensional variables in the closed-form WMMSE expressions, which are then recovered into the corresponding precoding and receive vectors. As a result, the proposed decentralized scheme enables low-complexity online inference, parallel execution across satellites, and scalability to varying numbers of satellites and UTs, making it more suitable for practical LEO satellite networks.
\end{itemize}
	
This paper is structured as follows: \secref{System} presents the multi-satellite system model and problem formulation. \secref{centralized_network} introduces the centralized learning-based WMMSE transmission design. \secref{decentralized_arch} presents the decentralized multi-satellite transmission architecture. \secref{decentralized_network} develops the corresponding learning-based decentralized WMMSE network design. \secref{simulation} provides the simulation results. Finally, \secref{conclusion} concludes this article.

{\textit{Notation}}: $x, {\bf x}, {\bf X}$ represent scalar, column vector, matrix. $(\cdot)^T$, $(\cdot)^{*}$, $(\cdot)^H$, and $(\cdot)^{-1}$ respectively denote the transpose, conjugate, transpose-conjugate, and inverse operations. $\left \|\cdot\right \|_{2}$ denotes $l_2$-norm. $\otimes$ and $\odot$ are the Kronecker product and Hardmard product operations. The operator ${\rm Tr}\left\{\cdot\right\}$ represents the matrix trace. $\mathbb{E}\left\{\cdot\right\}$ denotes the expectation. The expression $\mathcal{C}\mathcal{N}(\mu, \sigma^2)$ denotes circularly symmetric Gaussian distribution with expectation $\mu$ and variance $\sigma^2$. ${\mathbb{R}}^{M\times N}$ and ${\mathbb{C}}^{M\times N}$ represent the set of $M\times N$ dimension real- and complex-valued matrixes. $\nabla f$ denotes gradient of function $f(\cdot)$. $k\in \mathcal{K}$ means element $k$ belongs to set $\mathcal{K}$. $\Re(\cdot) $ and $\Im(\cdot)$ denote the real and imaginary parts of a complex scalar, vector, or matrix.
We use  $\tX_{[m_{1}, \ldots, m_{N}]}$ to denote the indexing of elements in tensor  $\tX \in \mathbb{R}^{M_{1} \times \cdots \times M_{N}} .[\tA_{1}, \ldots, \tA_{K}]_{S}$  denotes the tensor formed by stacking  $\tA_{1}, \ldots, \tA_{K}$  along the  $S$-th dimension.  $[\cdot]_{0}$ denotes the concatenation of tensors along a new dimension, i.e.,  $\tA_{[n,:, \cdots]}=\tB_{n}, \forall n$  when $\tA=\left[\tB_{1}, \ldots, \tB_{N}\right]_{0}$. We define the product of tensor $\tX \in \mathbb{R}^{M_{1} \times \cdots \times M_{N} \times D_{X}}$ and matrix $\mathbf{Y} \in \mathbb{R}^{D_{X} \times D_{Y}}$  as  $(\tX \times \mathbf{Y})_{[m_{1}, \ldots, m_{N},:]}= \tX_{[m_{1}, \ldots, m_{N},:]} \mathbf{Y}$.
\section{Multi-Satellite NCT Transmission System Model and Problem Formulation}\label{System}
We consider a downlink (DL) multi-satellite cooperative transmission system, as shown in \figref{MSIT_scenario}, comprising $S$ LEO satellites, performing transmission to $K$ \textcolor{black}{mobile} UTs within the same time-frequency resources. For clarity, we define the satellite set as $\mathcal{S} = \{1, \ldots, S\}$ and the UT set as $\mathcal{K} = \{1, \ldots, K\}$. Each satellite is equipped with a large-scale uniform planar array (UPA) consisting of  $M = M_{\text{x}} \times M_{\text{y}}$  antenna elements, where  $M_{\text{x}}$  and  $M_{\text{y}}$  represent the number of antennas along the x-axis and y-axis, respectively. Similarly, the UTs are equipped with a UPA composed of  $N = N_{\text{x}^{\prime}} \times N_{\text{y}^{\prime}}$  antenna elements.
In this cooperative transmission framework, NCT transmission is enabled by multi-antenna UT reception, allowing distributed precoding across satellites to deliver independent data streams to the UTs. Compared with schemes where multiple satellites serve the same UTs over orthogonal time-frequency resources, cooperative precoding on shared resources can better exploit spatial division multiplexing gains, \textcolor{black}{leading to enhanced spectral efficiency and improved resource utilization.}
\begin{figure}[tp]
	\centering
    \vspace{-8mm}
\includegraphics[width=0.8\linewidth,trim=0.1cm 0.8cm 0.1cm 0.5cm,clip]{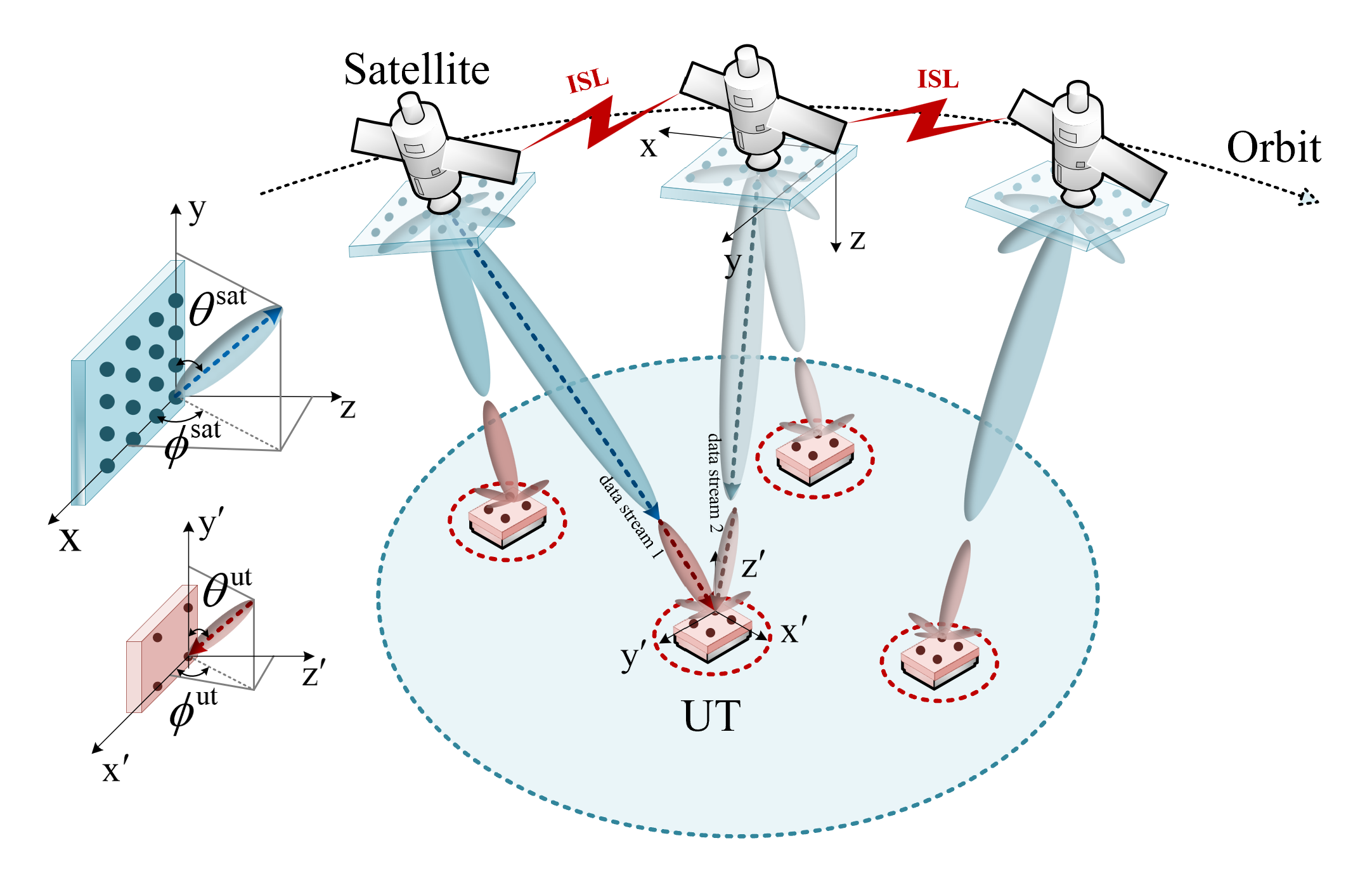}
\vspace{-1mm}\caption{\textcolor{black}{The system architecture of the multi-satellite transmission.}}
\label{MSIT_scenario}\vspace{-5mm}
\end{figure}
\vspace{-3mm}\subsection{Channel and Signal Models}
To support broadband communication in the multi-satellite system, the entire system bandwidth is employed for OFDM transmission. Let the number of subcarriers be denoted by \(N_{\rm sc}\), the cyclic prefix (CP) length by \(N_{\rm cp}\), and the system sampling period by \(T_{\rm sam}\). Accordingly, the CP duration is given by \(T_{\rm cp}=N_{\rm cp}T_{\rm sam}\). The OFDM symbol duration without CP is \(T_{\rm sc}=N_{\rm sc}T_{\rm sam}\), while the total symbol duration including CP is \(T_{\rm sym}=T_{\rm sc}+T_{\rm cp}\). The time-domain transmit signal vector of the \(s\)-th satellite in the \(m\)-th OFDM symbol, denoted as \(\mathbf{x}_{s,m}(t)\in \mathbb{C}^{M \times 1}\), can be expressed as:
\begin{align}
\mathbf{x}_{s,m}(t)\!=\!\!\! \sum_{r=0}^{N_{\rm sc}-1} \!\!\mathbf{x}_{s,m,r} {\mathrm e}^{j2\pi r \Delta f t},  - T_{\rm cp} \!\leq\! t\!-\!mT_{\rm sym}\! < \!T_{\rm sc},
\end{align}
herein, $\mathbf{x}_{s,m,r}$ denotes the frequency-domain transmit signal vector over the $r$-th subcarrier in the $m$-th OFDM symbol and $\Delta f =1/T_{\rm sc}$. The DL time-domain received signal vector of the $k$-th UT from the $s$-th satellite in the $m$-th OFDM symbol at the $t$-th time instant can be written as:
\begin{align}
\mathbf{y}_{s,k,m}(t)=\int_{-\infty}^{\infty} \tilde{\mathbf{H}}_{s,k}(t,\tau)\mathbf{x}_{s,m}(t-\tau){\rm d}\tau+\mathbf{z}_{s,k,m}(t),
\end{align}
\textcolor{black}{where we denote $\tau$ as the delay variable,} $\mathbf{z}_{s,k,m}(t)=\sum_{t \in \mathcal{S} \setminus s} \int_{-\infty}^{\infty} \tilde{\mathbf{H}}_{t,k}(t, \tau) \mathbf{x}_{t,m}(t-\tau)+ \tilde{\mathbf{n}}_{k}(t)$, and $\tilde{\mathbf{n}}_{k}(t) \in \mathbb{C}^{N \times 1}$ denotes
the additive white Gaussian noise vector with distribution $\mathcal{C N}(\mathbf{0}, \sigma_{k}^{2} \mathbf{I}_N)$.
Herein, the time-varying spatial
domain MIMO channel $\tilde{\mathbf{H}}_{s,k}(t,\tau) \in \mathbb{C}^{N \times M}$ from the $s$-th satellite to the $k$-th UT can be expressed as \cite{Li2022}:
\begin{align}
\tilde{\mathbf{H}}_{s,k}(t,\tau)\!=\!\!\!\!\!\sum_{\ell=0}^{L_{s,k}-1} \!\!a_{s,k, \ell} {\mathrm e}^{j 2 \pi \nu_{s,k,\ell} t} \delta(\tau\!-\!\tau_{s,k,\ell})\mathbf{d}_{s,k, \ell}\mathbf{g}_{s,k,\ell}^H,
\end{align}
where $L_{s,k}$ is the multipath number of the channel from the $s$-th satellite to the $k$-th UT, $a_{s,k,\ell}$, $\nu_{s,k,\ell}$, $\tau_{s,k,\ell}$, $\mathbf{d}_{s,k,\ell}$ and $\mathbf{g}_{s,k,\ell}$ are the complex channel gain, Doppler shift, propagation delay, receive array response vector and transmit array response vector associated with the $\ell$-th path.

The satellite position, attitude information, and the global navigation satellite system (GNSS) of the UTs can be jointly exploited to estimate the angular parameters associated with the LoS path. Specifically, the angle-of-departure (AoD) at the satellite side is denoted by $(\theta_{s,k,\ell}^{\rm sat}, \phi_{s,k,\ell}^{\rm sat})$, while the angle-of-arrival (AoA) at the UT side is denoted by $(\theta_{s,k,\ell}^{\rm ut}, \phi_{s,k,\ell}^{\rm ut})$.
In LEO satellite channels, owing to the large satellite-to-ground distance and the fact that scattering occurs within a few kilometers around the UT, the AoD pairs corresponding to different multipath components of the same link are nearly identical. Therefore, we have
$(\theta_{s,k,\ell}^{\rm sat}, \phi_{s,k,\ell}^{\rm sat}) = (\theta_{s,k}^{\rm sat}, \phi_{s,k}^{\rm sat}), \forall \ell$.
Accordingly, the transmit array response vectors at the satellite side can be expressed as \cite{You2020,Liu2022}:
\begin{align}
\mathbf{g}_{s,k,\ell}\! =\! \mathbf{g}_{s,k}\! =\! \mathbf{a}_{N_{\rm x}}\!\!\left(\sin \theta^{\rm sat}_{s,k} \cos \phi^{\rm sat}_{s,k}\right)\! \otimes \!\mathbf{a}_{N_{\rm y}}\!\!\left(\cos \theta^{\rm sat}_{s,k}\right).
\end{align}
The receive array response vectors are given by:
\begin{align}
\mathbf{d}_{s,k,\ell} = \mathbf{a}_{N_{\rm x^{\prime}}}\left(\sin \theta^{\rm ut}_{s,k,\ell} \cos \phi^{\rm ut}_{s,k,\ell}\right) \otimes \mathbf{a}_{N_{\rm y^{\prime}}}\left(\cos \theta^{\rm ut}_{s,k,\ell}\right).
\end{align}
We define $\mathbf{a}_{n_{\rm v}}(x) = \frac{1}{\sqrt{n_{\rm v}}}[1, \mathrm{e}^{-\frac{j 2 \pi}{\lambda}d_{\rm v}x}, \ldots, \mathrm{e}^{-\frac{j 2 \pi}{\lambda} d_{\rm v}(n_{\rm v}-1)x}]^{T}$. \( d_{\rm v} = \frac{\lambda}{2} \) represents the distance between adjacent antenna elements along the $\rm v$-axis and \( {\rm v}\in \left\{\rm x,\rm y,\rm x^{\prime},\rm y^{\prime} \right\} \).
Since the signals from different satellites arrive asynchronously at the UT, joint processing is challenging. Thus, spatial linear receive processing is adopted to separate the signals, the resulting received signal $r_{s,k,m}(t)$ is given by:
\begin{align}
&r_{s,k,m}(t)=\mathbf{b}_{s,k}^{H} \mathbf{y}_{s,k,m}(t)\nonumber\\
&=\mathbf{b}_{s,k}^{H} \left(\int_{-\infty}^{\infty} \tilde{\mathbf{H}}_{s,k}(t,\tau)\mathbf{x}_{s,m}(t-\tau){\rm d}\tau+\mathbf{z}_{s,k,m}(t)\right),
\end{align}
\( \mathbf{b}_{s,k} \!\!\in\!\! \mathbb{C}^{N \times 1}\) denotes the receive vector at the $k$-th UT for extracting the signal from the $s$-th satellite. \textcolor{black}{The \(k\)-th UT estimates the Doppler shift and propagation delay associated with the signal from the \(s\)-th satellite using the satellite and UT positions obtained from ephemeris and GNSS \cite{Xiang2024}.} After UT performing Doppler and delay compensation, the compensated signal at the \(k\)-th UT from the \(s\)-th satellite is:
\begin{align}
&r_{s,k,m}^{\rm cps}(t)={r}_{s,k,m}(t+\tau_{s,k}^{\rm cps})\mathrm{e}^{-j2\pi\nu_{s,k}^{\rm cps}(t+\tau_{s,k}^{\rm cps})}\nonumber\\
&=\mathbf{b}_{s, k}^{H}\left(\int_{-\infty}^{\infty} \hat{\mathbf{H}}_{s, k}^{\mathrm{cps}}(t, \tau) \mathbf{x}_{s, m}(t-\tau) \mathrm{d} \tau+\mathbf{z}_{s, k, m}^{\mathrm{cps}}(t)\right),
\end{align}
where $\nu_{s,k}^{\rm cps}$ and $\tau_{s,k}^{\rm cps}$
are compensated Doppler and delay, respectively. The compensated equivalent channel is:
\begin{align}
\hat{\mathbf{H}}_{s, k}^{\mathrm{cps}}(t, \tau)\!=\!\!\!\!\sum_{\ell=0}^{L_{s, k}-1} \!\!\!\hat{a}_{s, k, \ell} \mathrm{e}^{j 2 \pi \nu_{s, k, \ell}^{\mathrm{ut}} t} \delta\!\left(\tau-\tau_{s, k, \ell}^{\mathrm{ut}}\right) \mathbf{d}_{s, k, \ell} \mathbf{g}_{s, k}^{H}.
\end{align}
 Herein, \( \hat{a}_{s,k,\ell} \triangleq a_{s,k,\ell} \mathrm{e}^{j 2 \pi \nu_{s,k,\ell}^{\mathrm{ut}} \tau_{s,k}^{\mathrm{cps}}} \). The residual Doppler shift and propagation delay are defined as \( \nu_{s,k,\ell}^{\mathrm{ut}} \triangleq \nu_{s,k,\ell}-\nu_{s,k}^{\mathrm{cps}} \) and
\( \tau_{s,k,\ell}^{\mathrm{ut}} \triangleq \tau_{s,k,\ell}-\tau_{s,k}^{\mathrm{cps}} \), respectively.
The compensated $\mathbf{z}_{s,k,m}^{\mathrm{cps}}(t)$ is given by \( \mathbf{z}_{s,k,m}^{\mathrm{cps}}(t) \triangleq \mathbf{z}_{s,k,m}(t+\tau_{s,k}^{\mathrm{cps}})\mathrm{e}^{-j 2 \pi \nu_{s,k}^{\mathrm{cps}}(t+\tau_{s,k}^{\mathrm{cps}})} \). \textcolor{black}{Since the Doppler and delay effects caused by satellite mobility and long propagation distances are properly compensated, the residual Doppler and delay are sufficiently small to support reliable OFDM transmission \cite{Xiang2024}.} Consequently, time and frequency synchronization can be assumed for the link between the \(s\)-th satellite and the \(k\)-th UT after compensation, whereas the signals from other satellites remain asynchronous at the \(k\)-th UT. After the above processing, we assume that the channel is invariant within one OFDM symbol, and the channel may change from symbol to symbol. After OFDM demodulation, the received signal over the $r$-th subcarrier in the $m$-th OFDM symbol is given by \cite{Li2022,Xiang2024}:
\begin{align}
\begin{split}
r_{s,k,m,r} = \mathbf{b}_{s,k}^H\mathbf{H}_{s,k,m,r}^{\rm eff}\mathbf{x}_{s,m}+{z}_{s,k,m,r},
\end{split}
\end{align}
Without ambiguity, we omit the subscripts of OFDM symbol $m$ and subcarrier $r$.
The effective time-frequency domain channel matrix between the $s$-th satellite and the $k$-th UT $\mathbf{H}_{s,k}^{\rm eff}= \mathbf{d}_{s, k}^{\rm eff} \mathbf{g}_{s,k}^{H} \in \mathbb C^{N\times M}$.
The transmit signal $\mathbf{x}_{s,m}=\mathbf{x}_s$ is:
\begin{align}
\mathbf{x}_{s}=\sum_{k\in \mathcal{K}} \mathbf{p}_{s,k}{x}_{s,k} \in \mathbb C^{M\times 1},
\end{align}
herein, $\mathbf{p}_{s,k} \in \mathbb{C}^{M \times 1}$ denotes the precoding vector of the $s$-th satellite. Then, we have the received signal:
\begin{align}
r_{s,k}=\mathbf{b}_{s,k}^{H} \mathbf{H}_{s,k} \sum_{m \in \mathcal K}\mathbf{p}_{s,m}x_{s,m}+z_{s,k},
\end{align}
\textcolor{black}{where \(z_{s,k}\) denotes the aggregate interference-plus-noise term, including both inter-satellite and intra-satellite interference, at the \(k\)-th UT for the \(s\)-th satellite,} whose variance is given by  $\mathbb{E}\{z_{s,k} z_{s,k}^*\} = \mathbf{b}^H_{s,k}\mathbf{R}_{s,k}\mathbf{b}_{s,k}$, and the covariance matrix is:
\begin{align}
\mathbf{R}_{s,k}=\sum_{t \in \mathcal{S} \setminus s} \sum_{m \in \mathcal{K}} \mathbf{g}_{t,k}^{H} \mathbf{p}_{t,m} \mathbf{p}_{t,m}^H \mathbf{g}_{t,k} \mathbf{R}_{t,k}^{\mathrm{ut}}+\sigma_{k}^{2} \mathbf{I}_{N},
\end{align}
where $\mathbf{R}_{t,k}^{\mathrm{ut}}$ denotes the channel correlation matrix at UT side.
Note that we consider multi-satellite NCT transmission, where multiple satellites transmit independent data streams to the multi-antenna UT, which differs from coherent multi-satellite transmission, where identical data streams are transmitted and coherently combined at the UT \cite{wang2026,wang2025MSMS}. 
\subsection{Statistical Characteristics of Satellite Channel}
The channel between the $s$-th satellite and the $k$-th UT can be further modeled as a Rician fading channel \cite{Xiang2024,Li2022,wang2025MSMS}:
\begin{align}
\begin{split}
\mathbf{H}_{s,k}=\sqrt{\frac{\kappa_{s,k} \beta_{s,k}}{\kappa_{s,k}+1}} \mathbf{d}_{s,k,0}\mathbf{g}_{s,k}^{H}+\sqrt{\frac{\beta_{s,k}}{\kappa_{s,k}+1}} \tilde{\mathbf{d}}_{s,k}\mathbf{g}_{s,k}^{H},
\end{split}
\end{align}
herein, the LoS component satisfies $\|\mathbf{d}_{s,k,0}\|^{2}=\|\mathbf{g}_{s,k}\|^{2}=1$, while the NLoS component follows a complex Gaussian distribution $\tilde{\mathbf{d}}_{s,k} \sim \mathcal{CN}(\mathbf{0},\boldsymbol{\Sigma}_{s,k})$ with $\operatorname{Tr}(\boldsymbol{\Sigma}_{s,k})=1$. 
The receiver steering vector $\mathbf{d}_{s,k}=\sqrt{\frac{\kappa_{s,k}\beta_{s,k}}{\kappa_{s,k}+1}}\mathbf{d}_{s,k,0} 
+\sqrt{\frac{\beta_{s,k}}{\kappa_{s,k}+1}}\tilde{\mathbf{d}}_{s,k}$,
and $\beta_{s,k}=\textcolor{black}{\mathbb{E}_{\mathbf{d}}}\{\operatorname{Tr}(\mathbf{H}_{s,k}\mathbf{H}_{s,k}^{H})\} 
=\textcolor{black}{\mathbb{E}_{\mathbf{d}}}\{\|\mathbf{d}_{s,k}\|^{2}\}$ denotes the average channel power. The channel correlation matrices at the UT side can be calculated as:
\begin{align}
\!\!\!\!\!\!\!\!\!\mathbf{R}_{s,k}^{\rm {ut}} \!\!=\!\!\textcolor{black}{\mathbb{E}_{\mathbf{d}}}\!\!\left\{\!\mathbf{H}_{s,k} \mathbf{H}_{s,k}^{H}\!\right\}\!\!=\!\!\frac{\kappa_{s,k} \beta_{s,k}}{\kappa_{s,k}\!+\!1} \mathbf{d}_{s,k,0} \mathbf{d}_{s,k, 0}^{H}\!\!+\!\!\frac{\beta_{s,k}}{\kappa_{s,k}\!+\!1} \!\boldsymbol{\Sigma}_{s,k},\label{Rut}
\\
\mathbf{R}_{s,k}^{\rm {sat}}  =\textcolor{black}{\mathbb{E}_{\mathbf{d}}}\left\{\mathbf{H}_{s,k}^{H} \mathbf{H}_{s,k}\right\}=\beta_{s,k}\mathbf{g}_{s,k} \mathbf{g}_{s,k}^{H}.\label{Rsat}
\end{align}
\textcolor{black}{Based on the satellite channel characteristics described above, we summarize the sCSI available at the satellite side: $\{\phi^{\rm sat}_{s,k}, \theta^{\rm sat}_{s,k}, \phi^{\rm ut}_{s,k,0}, \theta^{\rm ut}_{s,k,0}, \boldsymbol{\Sigma}_{s,k}, \,\beta_{s,k}, \kappa_{s,k}\}_{\forall s,\forall k}$.}
The DL ergodic rate from the $s$-th satellite to the $k$-th UT is:
\begin{align}
R_{s,k}&=\mathbb{E}_{\mathbf{H}}\left\{\log_2 \left(1+{{\alpha}_{s,k}^{-1}\mathbf{b}_{s,k}^{H} \mathbf{H}_{s,k} \mathbf{p}_{s,k}\mathbf{p}_{s,k}^H\mathbf{H}_{s,k}^H\mathbf{b}_{s,k}}\right) \right\}\nonumber\\
&=\textcolor{black}{\mathbb{E}_{\mathbf{d}}}\left\{\log_2 \left(1+{\alpha_{s,k}^{-1} |\mathbf{p}_{s,k}^H\mathbf{g}_{s,k}|^2|\mathbf{b}_{s,k}^H\mathbf{d}_{s,k}|^2}\right) \right\},
\end{align}
herein, ${\alpha}_{s,k}\!\triangleq\! \sum_{m \in \mathcal{K}\setminus k}|\mathbf{p}_{s,m}^H\mathbf{g}_{s,k}|^2|\mathbf{b}_{s,k}^H\mathbf{d}_{s,k}|^2\!\!+z_{s,k}z_{s,k}^*$.
\vspace{-2mm}\subsection{Multi-Satellite WSR Maximization Problem Formulation}\label{centralized_optimization}
\textcolor{black}{Assuming the sCSI available at satellites,} the WSR maximization problem for multi-satellite transmission can be formulated as \cite{Xiang2024}:
\begin{align}
	\begin{split}
	\max_{\left\{\mathbf{b}_{s,k},\mathbf{p}_{s,k}\right\}_{\forall s, k}}&\quad \sum_{s \in \mathcal S} \sum_{k \in \mathcal{K}}  a_{s,k}R_{s,k}, \\
		\text{ s.t. } 
		& \quad \sum_{k \in \mathcal{K}} \|\mathbf{p}_{s,k}\|_2^2 \leq P_s^{\rm sat},  \forall s \in  \mathcal S,
	\end{split}
\end{align}
$a_{s,k}$ denotes the user rate weight, and $P_s^{\rm sat}$ is the satellite transmit power budget. To facilitate the optimization, we optimize the WSR maximization problem by solving the following WMMSE problem:
\begin{align}
	\begin{split}
	\!\!\!\underset{\left\{{u}_{s,k},{w}_{s,k},\mathbf{p}_{s,k},\mathbf{b}_{s,k}\right\}_{\forall s, k}}{\min} \quad &\sum_{s \in \mathcal S} \sum_{k \in \mathcal{K}} \left( w_{s,k}e_{s,k}-\log w_{s,k}\right), \\
	\!\!\!	\text{ s.t. } 
		 \quad &\sum_{k \in \mathcal{K}} \|\mathbf{p}_{s,k}\|_2^2 \leq P_s^{\rm sat},  \forall s \in  \mathcal S,
	\end{split}
\end{align}
where $w_{s,k}$ represents the error weight, and the MSE $e_{s,k}$ can be expressed as:
\begin{align}
	\begin{split}
e_{s,k}&=\mathbb{E}_{\mathbf{H},{x},{z}}\left\{|u_{s,k}r_{s,k}-x_{s,k}|^2\right\}\\
&=\left|u_{s, k}\right|^{2} \left(\eta_{s, k}+\zeta_{s,k}\right)-u_{s, k}^*\xi_{s, k}^*-u_{s, k}\xi_{s, k}+1,
 \end{split}
\end{align}
where the desired signal level $\xi_{s,k} =  \mathbb{E}_\mathbf{H}\{\mathbf{b}_{s, k}^{H} \mathbf{H}_{s, k} \mathbf{p}_{s, k}\}=\sqrt{\frac{\kappa_{s,k}\beta_{s,k}}{\kappa_{s,k}+1}}\mathbf{b}_{s, k}^{H} \mathbf{d}_{s,k,0}\mathbf{g}_{s,k}^H \mathbf{p}_{s,k}$, and the desired signal power $\zeta_{s,k}=\mathbb{E}_\mathbf{H}\{|\mathbf{b}_{s, k}^{H} \mathbf{H}_{s, k} \mathbf{p}_{s, k}|^2\}=|\mathbf{p}_{s,k}^H\mathbf{g}_{s,k}|^2\mathbf{b}_{s,k}^H\mathbf{R}_{s,k}^{\rm ut}\mathbf{b}_{s,k}$. Moreover, the aggregate interference-plus-noise signal power $\eta_{s, k}=\mathbb{E}_{\mathbf{H},z}\{\sum_{m\in \mathcal{K}\setminus k}|\mathbf{b}_{s, k}^{H} \mathbf{H}_{s, k} \mathbf{p}_{s, m}|^2+z_{s,k}z_{s,k}^*\}=\sum_{m \in \mathcal{K}\setminus k}|\mathbf{p}_{s,m}^H\mathbf{g}_{s,k}|^2\mathbf{b}_{s,k}^H\mathbf{R}_{s,k}^{\rm ut}\mathbf{b}_{s,k}+\mathbf{b}_{s,k}^H\mathbf{R}_{s,k}\mathbf{b}_{s,k}$.
\begin{algorithm}[t]
\caption{Centralized Optimization-Based WMMSE}
	\label{opt-wmmse-central}
		\begin{algorithmic}[1]
			\STATE \textbf{Input:} \!$\!\{\mathbf{g}_{s,k},\!\mathbf{d}_{s,k,0},\!\beta_{s,k},\!\kappa_{s,k},\!\mathbf{\Sigma}_{s,k}\}_{\forall s, k}$,\!$\{\sigma_{k}^2\}_{\forall k}$,\! $\{P_s^{\rm sat}\}_{\forall s}$.
			\STATE Initialize \(\{{\mathbf{p}_{s,k}}^{(0)},{\mathbf{b}_{s,k}}^{(0)}\}_{\forall s, k}\), $\epsilon^{(0)}$ and $n=1$.
			\STATE \textbf{repeat} 
			\STATE \quad Calculate  $\{{{u}_{s,k}^{\star}}^{(n)},{{w}_{s,k}^{\star}}^{(n)}\}_{\forall s, k}$ from \eqref{mu} and \eqref{weight}. 
			\STATE \quad Solve vector ${\boldsymbol{\lambda}^{\star}}^{(n)}$ from problem \eqref{WMMSE2}.
		    \STATE \quad Obtain $\{{\mathbf{p}^{\star}_{s,k}}^{(n)}\}_{\forall s, k}$ and $\{{\mathbf{b}^{\star}_{s,k}}^{(n)}\}_{\forall s, k}$ from \eqref{precoding_cf}\eqref{beamforming_cf}. 
        	\STATE \quad Calculate $\epsilon^{(n)}$ in \eqref{WMMSE2}.
  			\STATE \quad Set $n = n + 1$.
			\STATE \textbf{until} $n \geq I_{\rm max}^{\rm out}$ \textbf{or} $\epsilon$ convergence.
			\STATE \textbf{Output:} $\{\mathbf{p}_{s,k}^{\star},\mathbf{b}_{s,k}^{\star}\}_{\forall s, k}$.
		\end{algorithmic}
\end{algorithm}

With \(\{\mathbf{b}_{s,k},\mathbf{p}_{s,k}\}_{\forall s, k}\) and \(\{{w}_{s,k}\}_{\forall s, k}\) fixed, the optimal value of \(u_{s,k}^{\star}\) can be determined by setting the derivative of \(e_{s,k}(u_{s,k})\) with respect to \(u_{s,k}^*\) equal to zero, which yields:
\begin{align}
	\begin{split}
   u_{s, k}^{\star}
   =\frac{\xi_{s, k}^*}{\zeta_{s, k}+\eta_{s,k}},\quad
    e_{s, k}^{\star}=1-\frac{|\xi_{s, k}|^2}{\zeta_{s,k}+\eta_{s, k}}.
\label{mu}
	\end{split}	
\end{align}

 By equating the gradients of their respective Lagrangian functions, the optimal MSE-weights \(\{{w}_{s,k}\}_{\forall s, \forall k}\) can be derived for given \(\{\mathbf{b}_{s,k},\mathbf{p}_{s,k}\}_{\forall s,\forall k}\) and \(\{{u}_{s,k}^{\star}\}_{\forall s, \forall k}\) as \cite{Shi2011}:
\begin{align}
	\begin{split}
		{w}_{s,k}^{\star}=\frac{a_{s,k}}{{e}_{s,k}}.
        \label{weight}
	\end{split}	
\end{align}
Then we fix vectors  \(\left\{\mathbf{w}_{s}^{\star}\right\}_{\forall s}\)  and \(\left\{\boldsymbol{u}_{s}^{\star}\right\}_{\forall s}\), the optimization problem is reduced to:
\begin{align}
	\begin{split}
	      \min_{\left\{\mathbf{b}_{s,k},\mathbf{p}_{s,k}\right\}_{\forall s, k}}&\quad \sum_{s \in \mathcal S} \sum_{k \in \mathcal{K}} w_{s,k}e_{s,k}(\left\{\mathbf{b}_{s,k},\mathbf{p}_{s,k}\right\}_{\forall s, k}), \\
		\text{ s.t. } 
		& \quad \sum_{k \in \mathcal{K}} \|\mathbf{p}_{s,k}\|_2^2 \leq P_s^{\rm sat},  \forall s \in  \mathcal S.
	\end{split}	\label{WMMSE1}
\end{align}
The Lagrange function is $\mathcal{L}(\{\{\mathbf{b}_{s,k},\mathbf{p}_{s,k}\}_{\forall k},\lambda_{s}\}_{\forall s})=\sum_{\forall s,\forall k}
w_{s,k}\,
e_{s,k}(\{\mathbf{b}_{s,k},\mathbf{p}_{s,k}\}_{\forall s,k})+\sum_{\forall s}
\lambda_{s}
(
\sum_{k \in \mathcal{K}} \|\mathbf{p}_{s,k}\|_2^{2}
- P_s^{\rm sat}
)$.
By setting \( \nabla_{\mathbf{b}_{s,k}^*} \mathcal{L} = \mathbf{0} \) and \( \nabla_{\mathbf{p}_{s,k}^*} \mathcal{L} = \mathbf{0} \), respectively, we formulate the closed-form iterative expression for \( \mathbf{b}_{s,k}^{\star} \) as \eqref{beamforming_cf} and \( \mathbf{p}_{s,k}^{\star} \) as \eqref{precoding_cf}.
\begin{figure*}[htbp]
	% ensure that we have normalsize text
	\normalsize
	% Store the current equation number.
	%	\setcounter{MYtempeqncnt}{\value{equation}}
	% Set the equation number to one less than the one
	% desired for the first equation here.
	% The value here will have to changed if equations
	% are added or removed prior to the place these
	% equations are referenced in the main text.
	% \setcounter{equation}{5}
	\vspace{-10mm}
	%\hrulefill
    \begin{align}
		\mathbf{b}_{s, k}^{\star}
         \textstyle =\sqrt{\frac{\kappa_{s,k}\beta_{s,k}}{\kappa_{s,k}+1}}w_{s,k}u_{s, k}\Bigg(w_{s,k}|u_{s,k}|^2\sum_{t\in \mathcal{S}}\sum_{m\in \mathcal{K}}|\mathbf{p}_{t,m}^H\mathbf{g}_{t,k}|^2\mathbf{R}_{t,k}^{\rm ut}+w_{s,k}|u_{s,k}|^2\sigma^2_{k}\mathbf{I}_{N}\Bigg)^{-1}\mathbf{d}_{s, k,0}\mathbf{g}_{s,k}^H\mathbf{p}_{s, k},
        \label{beamforming_cf}\\
	\mathbf{p}_{s, k}^{\star}
\textstyle=\sqrt{\frac{\kappa_{s,k}\beta_{s,k}}{\kappa_{s,k}+1}}w_{s, k} u_{s, k}^*
\Bigg(\sum_{t \in \mathcal{S}}
    \sum_{m \in \mathcal{K}}
    \textcolor{black}{w_{t,m}|u_{t,m}|^{2}}
\mathbf{b}_{t,m}^H\mathbf{R}_{s,m}^{\rm ut}\mathbf{b}_{t,m}\mathbf{g}_{s,m}\mathbf{g}_{s,m}^H
    + \lambda_{s} \mathbf{I}_{M}\Bigg)^{-1}
\mathbf{g}_{s,k}\mathbf{d}_{s,k,0}^H\mathbf{b}_{s, k}.
\label{precoding_cf}
    \end{align}
\hrulefill
\vspace{-6mm}
\end{figure*}
By substituting this expression into problem \eqref{WMMSE1}, it can be reduced to the following optimization problem in terms of Lagrange multipliers \(\boldsymbol{\lambda}=[\lambda_1,\ldots,\lambda_{S}]\):
\begin{align}
	\begin{split}
	      \min_{\boldsymbol{\lambda}}&\quad \epsilon=\sum_{s \in \mathcal S} \sum_{k \in \mathcal{K}} w_{s,k}e_{s,k}(\boldsymbol{\lambda}), \\
		\text{ s.t. } 
        & \quad \sum_{k \in \mathcal{K}} \|\mathbf{p}_{s,k}(\lambda_s)\|_2^2 \leq P_s^{\rm sat},  \forall s \in  \mathcal S,
	\end{split}	\label{WMMSE2}
\end{align}
which transforms solving such a problem into finding the optimal \(\{\lambda_s^{\star}\}_{\forall s}\) with conventional optimization toolboxes. By substituting the obtained \(\{\lambda_s^{\star}\}_{\forall s}\)  into \eqref{precoding_cf}, the optimal precoding vectors and receive vectors  \(\{\mathbf{p}_{s,k}^{\star},\mathbf{b}_{s,k}^{\star}\}_{\forall s, k}\) can be derived without any performance loss.
In conclusion, we optimize the Lagrange multipliers \(\{\lambda_s^{\star}\}_{\forall s}\) in each iteration, and update \(\{{u}_{s,k}^{\star}\}_{\forall s, k}\), \(\{{w}_{s,k}^{\star}\}_{\forall s, k}\), \(\{\mathbf{p}_{s,k}^{\star}\}_{\forall s, k}\) and \(\{\mathbf{b}_{s,k}^{\star}\}_{\forall s, k}\) alternately until convergence, summarized in \algref{opt-wmmse-central}.
\vspace{-2mm}\section{Learning-Based Centralized Multi-Satellite WMMSE Transmission Design}\label{centralized_network}
In multi-satellite massive MIMO systems, large antenna arrays and a high user load make the iterative multi-satellite scheme computationally expensive.  To reduce complexity, we propose a learning-based centralized WMMSE that uses a lightweight neural network to directly infer the key variables of the precoding vectors from \textcolor{black}{slowly varying sCSI: angles, channel power, Rician factor, satellite power budget, and noise power,} thereby avoiding iterative updates.  Exploiting the inherent mapping structure of the WMMSE formulation, the network can be trained in a specific scenario and then applied to dynamic settings with varying numbers of satellites and UTs, enabling scalable deployment while maintaining excellent performance.
\vspace{-3mm}\subsection{Model Knowledge Discovery}
It can be observed that the core of \algref{opt-wmmse-central} lies in obtaining the optimal parameters $\{u_{s,k}^{\star}\}_{\forall s,k}$, $\{w_{s,k}^{\star}\}_{\forall s,k}$, and $\{\lambda_s^{\star}\}_{\forall s}$ through iterative computation, which are then employed to derive the corresponding precoding vectors and receive vectors.
To simplify the prediction, we reformulate the closed-form expression:
\begin{align}
\!\!\!\!\!\mathbf{p}_{s, k}\!\!=\!\! w_{s, k} u_{s, k}^*\!
\Big(\!
\sum_{m \in \mathcal{K}}
\mathbf{g}_{s,m}\mathbf{g}_{s,m}^H
\textcolor{black}{{\varrho}_{s,m}}\!\!+\!\!\lambda_s\mathbf{I}_{M}
\!\!\Big)^{-1}\!\!\!\!\!\!
\mathbf{g}_{s, k}\bar{\mathbf{d}}_{s, k}^H\mathbf{b}_{s, k}.
\label{re_precoding}
\end{align}
Herein, we define the low-dimensional variables ${\varrho}_{s,m}\triangleq \sum_{t \in  \mathcal{S}}w_{t,m}|\mu_{t,m}|^2\mathbf{b}_{t,m}^H \mathbf{R}_{s,m}^{\rm ut}\mathbf{b}_{t,m}, \forall s,m$, and denote \(\bar{\mathbf{d}}_{s,k} \triangleq \mathbb{E}\{\mathbf{d}_{s,k}\}=\sqrt{\frac{\kappa_{s,k}\beta_{s,k}}{\kappa_{s,k}+1}},\mathbf{d}_{s,k,0}\). 
From the closed-form expression in \eqref{re_precoding}, we observe that \(\{w_{s,k},u_{s,k}\}_{\forall s,k}\), \(\{\lambda_s\}_{\forall s}\), \(\{\varrho_{s,m}\}_{\forall s,m}\), and \(\{\mathbf{b}_{s,k}\}_{\forall s,k}\) are all low-dimensional variables. Rather than directly training a neural network to approximate the mapping to the high-dimensional precoding vectors, we instead learn the mapping to these low-dimensional variables, which significantly reduces the learning complexity and relaxes the representational requirements of the neural network.
\vspace{-7mm}
\subsection{Centralized Multi-Satellite Transmission Mapping}
According to \eqref{Rut}, \eqref{mu}, \eqref{weight}, \eqref{WMMSE2}, and \eqref{re_precoding}, we define the input set, which serves as the foundation for the subsequent learning process. In the centralized satellite network architecture, the central satellite combines the local features with the transmitted features from all cooperative satellites, forming the following centralized network input set $\mathcal{H}^{\rm cen} \triangleq\left\{\mathcal{H}_{s}\right\}_{\forall s \in \mathcal{S}}$:
\vspace{-2mm}\begin{align}
\!\!\!\!\!\!\mathcal{H}^{\rm cen}
\!\!\triangleq\!\!\left\{ \!\left\{\!\Phi_{s,k},\!\Theta_{s,k},\!\beta_{s,k},\! \kappa_{s,k}, \!\boldsymbol{\Sigma}_{s,k},\!\sigma_{k}^2\!\right\}_{\!\forall k \in \mathcal{K}} \!\cup\!\!  \left\{\!P_s^{\rm sat}\!\right\}\!\right\}_{\!\forall s \in \mathcal S}\!,
\end{align}
herein, we simplify the angular information as $\Phi_{s,k}=\{{\phi}^{\rm sat}_{s,k},{\phi}^{\rm ut}_{s,k,0}\}$ and $\Theta_{s,k}=\{{\theta}^{\rm sat}_{s,k},{\theta}^{\rm ut}_{s,k,0}\}$, both of which can be derived from the satellite position, satellite attitude, and UT GNSS information. The central satellite feature set $\mathcal{H}^{\rm cen}$ comprises the angular information of the satellite and UTs, the average channel power $\beta_{s,k}$, the Rician factor $\kappa_{s,k}$, the spatial covariance matrix $\boldsymbol{\Sigma}_{s,k}$ of the NLoS component, and the receiver noise power $\sigma_{k}^2$.
Based on \(\mathcal{H}^{\rm cen}\), we construct the centralized network input tensors, namely the third-order tensors \(\tA \in \mathbb{R}^{S \times K \times 6}\), \(\tR \in \mathbb{R}^{S \times K \times 2(M^2+N^2)}\), and \(\tB \in \mathbb{R}^{S \times K \times 2}\):
\vspace{-5mm}\begin{subequations}
\begin{align}
\tA_{[s,k,:]} &\!=\! \big[ \phi^{\rm sat}_{s,k},\, \theta^{\rm sat}_{s,k},\, \phi^{\rm ut}_{s,k,0},\, \theta^{\rm ut}_{s,k,0}, \sigma^2_{s,k}, P_{s,k}^{\rm sat} \big],\\
\tR_{[s,k,:]} &\!=\! \big[ \operatorname{vec}(\Re(\mathbf{R}_{s,k}^{\rm ut})),\,
\operatorname{vec}(\Im(\mathbf{R}_{s,k}^{\rm ut})),\nonumber\\
&\hspace{1.1em}\operatorname{vec}(\Re(\mathbf{R}_{s,k}^{\rm sat})),\,
\operatorname{vec}(\Im(\mathbf{R}_{s,k}^{\rm sat})) \big],\\
\tB_{[s,k,:]} &\!=\! \big[\beta_{s,k},\, \kappa_{s,k}\big],
\end{align}
\end{subequations}
where $\sigma_{s,k}^2 \triangleq \sigma^2_k, \forall s \in \mathcal{S}$ and $P_{s,k}^{\rm sat} \triangleq P_{s}^{\rm sat}, \forall k \in \mathcal{K}$. We collect the low-dimensional variables from \algref{opt-wmmse-central} into the following predicted tuple:
\begin{align}
\mathsf{O}^{\rm cen}&\triangleq 
\Big(
\{w_{s,k},u_{s,k}\}_{\forall s,k},\,
\{\lambda_s\}_{\forall s},\,
\{\varrho_{s,k}\}_{\forall s,k},\,
\{\mathbf{b}_{s,k}\}_{\forall s,k}
\Big)\nonumber\\
&=\Big(\mathbf{w},\boldsymbol{\mu},\boldsymbol{\lambda},\boldsymbol{\varrho},\tb\Big).
\end{align}
\(\boldsymbol{\lambda}\in \mathbb{R}^{S}\), \(\mathbf{w}\in \mathbb{R}^{S \times K}\) and \(\boldsymbol{\mu}, \boldsymbol{\varrho}\in \mathbb{C}^{S \times K}\), and \(\tb\in \mathbb{R}^{S \times K \times N}\).  Since problem~\eqref{WMMSE2} is non-convex, \algref{opt-wmmse-central} can be viewed as a mapping from the CSI tensors
\(\big(\tA,\,\tR,\,\tB\big)\) to an optimal output tuple \(\mathsf{O}^{\rm cen}\), i.e.,
\begin{align}
G^{\rm cen}(\tA,\,\tR,\,\tB)=\mathsf{O}^{\rm cen}.
\end{align}
In the \textcolor{black}{offline} learning stage, we aim to design \(G^{\rm cen}\) with a consistent structure that preserves the inherent equivariance of the problem.
We denote $\pi_S\in\mathbb{S}_S$ and $\pi_K\in\mathbb{S}_K$ as arbitrary permutations of satellite and UT indices, respectively. Let $\circ_n$ denote the permutation action on the $n$-th dimension. 
For $\tX\in\{\tA,\tR,\tB\}$, we write $\pi_S\circ_1\tX$ and $\pi_K\circ_2\tX$ for permuting the satellite dimension and UT dimension, respectively.
\begin{ppn}\label{ppn_cen}
For any $\pi_S\in\mathbb{S}_S$ and $\pi_K\in\mathbb{S}_K$, the achieved objective value of problem \eqref{WMMSE2} is equivariance under consistent permutations, i.e.,
\begin{align}
R\langle&\mathsf{O}^{\rm cen};\tA,\tR,\tB\rangle
\nonumber\\
=R\langle&\pi_S\circ_1\mathbf{w},\pi_S\circ_1\boldsymbol{\mu},\pi_S\circ_1\boldsymbol{\lambda},\pi_S\circ_1\boldsymbol{\varrho},\pi_S\circ_1\tb;\nonumber\\
&\pi_S\circ_1\tA,\ \pi_S\circ_1\tR,\ \pi_S\circ_1\tB \rangle, \label{cen_perm_obj_S}
\end{align}
\begin{align}
R\langle&\mathsf{O}^{\rm cen};\tA,\tR,\tB\rangle
\nonumber\\
=R\langle &\pi_K\circ_2\mathbf{w},\pi_K\circ_2\boldsymbol{\mu},\boldsymbol{\lambda},\pi_K\circ_2\boldsymbol{\varrho},\pi_K\circ_2\tb;\nonumber\\
&\pi_K\circ_2\tA,\ \pi_K\circ_2\tR,\ \pi_K\circ_2\tB \rangle. \label{cen_perm_obj_K}
\end{align}
\end{ppn}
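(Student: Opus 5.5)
The plan is to make the map $R\langle\cdot\,;\cdot\rangle$ explicit and then check that it is built only from permutation-invariant operations. Given an output tuple $\mathsf{O}^{\rm cen}=(\mathbf{w},\boldsymbol{\mu},\boldsymbol{\lambda},\boldsymbol{\varrho},\tb)$ and CSI tensors $(\tA,\tR,\tB)$, the objective value is obtained in three stages. First, the precoders $\mathbf{p}_{s,k}$ are recovered through the closed form \eqref{re_precoding}. Second, the entries $\xi_{s,k},\zeta_{s,k},\eta_{s,k}$ and the MSEs $e_{s,k}$ are formed. Third, these are aggregated into $\epsilon=\sum_{s\in\mathcal S}\sum_{k\in\mathcal K}w_{s,k}e_{s,k}$, or equivalently into the WSR.

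The key bookkeeping step is to note where each physical quantity sits in the tensors. $\mathbf{g}_{s,k}$ and $\mathbf{d}_{s,k,0}$ are functions of the angles in $\tA_{[s,k,:]}$. The matrices $\mathbf{R}^{\rm ut}_{s,k}$ and $\mathbf{R}^{\rm sat}_{s,k}$ sit in $\tR_{[s,k,:]}$. The pair $\beta_{s,k},\kappa_{s,k}$ sits in $\tB_{[s,k,:]}$, and $\sigma_k^2$ and $P_s^{\rm sat}$ are replicated in $\tA$. Hence every quantity carrying indices $(s,k)$ is read off at tensor position $(s,k)$. Permuting the first (resp. second) dimension of all inputs and outputs therefore relabels $s\mapsto\pi_S(s)$ (resp. $k\mapsto\pi_K(k)$) consistently in every scalar, vector and matrix entering the formulas.

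For \eqref{cen_perm_obj_S}, I would show equivariance of each stage under the satellite permutation.
\begin{enumerate}
\item In \eqref{re_precoding}, the matrix $\sum_{m}\mathbf{g}_{s,m}\mathbf{g}_{s,m}^H\varrho_{s,m}+\lambda_s\mathbf{I}_M$ depends only on row $s$ of the permuted data and on $\lambda_s$, which is permuted along with it. The permuted inputs therefore produce $\mathbf{p}'_{s,k}=\mathbf{p}_{\pi_S^{-1}(s),k}$.
\item The interference covariance $\mathbf{R}_{s,k}$ contains $\sum_{t\in\mathcal S\setminus s}$. The set $\mathcal S\setminus s$ maps bijectively onto $\mathcal S\setminus\pi_S(s)$, so the reindexed sum is unchanged. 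Likewise, the sums over $t\in\mathcal S$ inside $\varrho$ and \eqref{beamforming_cf} are invariant. This gives $e'_{s,k}=e_{\pi_S^{-1}(s),k}$.
\item Since the double sum $\sum_s\sum_k w_{s,k}e_{s,k}$ is invariant under reindexing of $s$, the objective value is unchanged.
\end{enumerate}
The same holds for the feasibility constraints $\sum_k\|\mathbf{p}_{s,k}\|^2\le P_s^{\rm sat}$, which are merely permuted among themselves.

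For \eqref{cen_perm_obj_K}, the argument is identical with the roles of the indices swapped. $\boldsymbol{\lambda}$ carries no UT index and is left untouched, which matches the statement. The sums $\sum_{m\in\mathcal K}$ in \eqref{re_precoding}, \eqref{precoding_cf} and $\mathbf{R}_{s,k}$, as well as $\sum_{m\in\mathcal K\setminus k}$ in $\eta_{s,k}$, are invariant under the bijection $\pi_K$. The per-UT noise $\sigma_k^2$ moves with $k$ because it is replicated in $\tA_{[s,k,:]}$, so the precoders and MSEs are permuted in $k$ and the total is unchanged.

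I expect the main obstacle to be notational rather than mathematical: making precise that $R\langle\cdot\rangle$ depends on the inputs only through the entrywise-indexed quantities, with no hidden index-dependent ordering. The places to check are the exclusion sums "$\setminus s$" and "$\setminus k$" and the mixed-index terms such as $\mathbf{b}_{t,m}^H\mathbf{R}^{\rm ut}_{s,m}\mathbf{b}_{t,m}$. I would handle this by writing each such term as a function of the pair $\big((s,k),(t,m)\big)$ and observing that a simultaneous relabelling of both pairs preserves it.
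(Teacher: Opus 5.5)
Your proposal is correct and is essentially the argument the paper relies on (its proof is omitted with a pointer to the cited tensor-equivariance works). Every quantity is read off at its own tensor position, the recovery \eqref{re_precoding} is local to row $s$, and all remaining couplings are sums over full or punctured index sets ($\mathcal S\setminus s$, $\mathcal K\setminus k$) that a bijective relabelling preserves. One small tightening: your aside ``or equivalently into the WSR'' requires the rate weights $a_{s,k}$ to be uniform or permuted along, since they are not part of $(\tA,\tR,\tB)$, whereas for $\epsilon=\sum_{s,k}w_{s,k}e_{s,k}$ with $\mathbf{w}$ taken from $\mathsf{O}^{\rm cen}$ no such assumption is needed.
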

\begin{pf} The proof is omitted for clarity \cite{wang2024,wang2026}.
\end{pf}
Based on \eqref{cen_perm_obj_S} and \eqref{cen_perm_obj_K}, the mapping $G^{\rm cen}$ satisfies the following tensor equivariance (TE) properties:
\begin{align}
\!\!\!G^{\rm cen}(\pi_S\!\circ_1\!\tA,\pi_S\!\circ_1\!\tR,\pi_S\!\circ_1\!\tB)
\!=\!\pi_S\circ_1 G^{\rm cen}(\tA,\tR,\tB), \label{TE_sat_cen}
\end{align}
\begin{align}
\!\!\!G^{\rm cen}(\pi_K\!\circ_2\!\tA,\pi_K\!\circ_2\!\tR,\pi_K\!\circ_2\!\tB)
\!=\!\pi_K\circ_2 G^{\rm cen}(\tA,\tR,\tB),\label{TE_user_cen}
\end{align}
which indicates that $G^{\rm cen}$ is permutation-equivariant with respect to both satellite and UT index reorderings. Moreover, although \eqref{WMMSE2} is non-convex and multiple optimal tuples may exist, \eqref{TE_sat_cen} and \eqref{TE_user_cen} still hold since the set of optimal solutions is closed under the above permutation actions.
\begin{figure*}[tbp]
\centering
	% ensure that we have normalsize text
	\normalsize
	% Store the current equation number.
	%	\setcounter{MYtempeqncnt}{\value{equation}}
	% Set the equation number to one less than the one
	% desired for the first equation here.
	% The value here will have to changed if equations
	% are added or removed prior to the place these
	% equations are referenced in the main text.
	% \setcounter{equation}{5}
	\vspace{-9mm}
	%\hrulefill
\includegraphics[width=0.73 \linewidth,trim=0.1cm 0.1cm 0.1cm 0.1cm,clip]{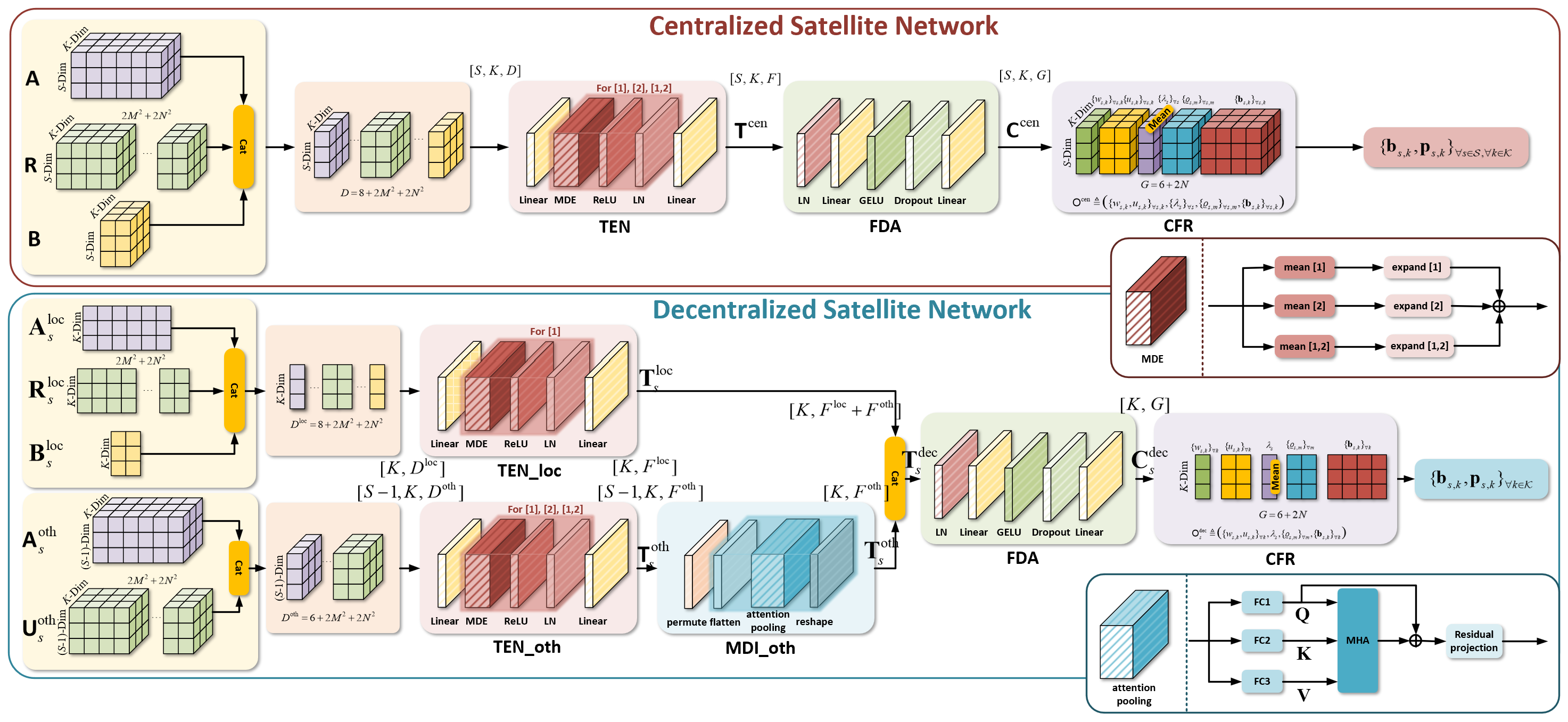}
 \vspace{-2mm}	\caption{Satellite network architectures: centralized and decentralized.}
\label{central_decentralized_network}\vspace{-5mm}
\end{figure*}
\subsection{Centralized Multi-Satellite Network Design}
As specifically shown in \figref{central_decentralized_network}, the proposed centralized network adopts a three-stage architecture, namely the tensor-equivariant neural (TEN) network, the centralized feature dimension adapter (FDA), and the centralized closed-form recovery (CFR). Specifically, TEN extracts permutation-equivariant latent features from the input tensors, FDA projects these features to the low-dimensional variables to be predicted, and CFR substitutes the predicted variables into the analytical expressions to recover the precoding and receive vectors.
\subsubsection{TEN Network}
In the preprocessing stage, we employ a TEN network to transform the concatenated input tensor into a latent representation while preserving permutation equivariance with respect to both the satellite and UT dimensions:
\begin{align}
\tT^{\rm cen}=N_{\rm TEN}^{\rm cen}\!\left(\tA,\tR,\tB\right) \in\mathbb{R}^{S\times K\times F}.
\end{align}
We first define a multi-dimensional linear layer that applies an affine mapping along the last dimension:
$
\mathrm{Linear}(\tX;\mathbf W,\mathbf b)
\triangleq 
\tX \times \mathbf W + \mathbf{1}_{M_1,\ldots,M_N,1}\odot \mathbf b^{T}$,
where $\tX\in\mathbb{R}^{M_1\times\cdots\times M_N\times D_{\rm I}}$, $\mathbf W\in\mathbb{R}^{D_{\rm I}\times D_{\rm O}}$, and $\mathbf b\in\mathbb{R}^{D_{\rm O}}$.
We construct the centralized input tensor $\tX^{\rm cen}\in\mathbb{R}^{S\times K\times D}$, where $D=8+2M^2+2N^2$.  The TEN consists of an input linear layer followed by \(L\) stacked blocks, each comprising a multi-dimensional equivariant (MDE) module, a rectified linear unit (ReLU) \cite{Nair2010_ReLU}, a layer normalization (LN), and an output linear layer:
\begin{subequations}
\begin{align}
\tX^{\rm cen}&=[\tA,\tR,\tB]_3\in\mathbb{R}^{S\times K\times D},\\
\tH^{(0)}_{\rm TEN}
&=\mathrm{Linear}\!\left(\tX^{\rm cen};\mathbf W_{1}^{\rm cen},\mathbf b_{1}^{\rm cen}\right)\in\mathbb{R}^{S\times K\times d_{\rm h}},\\
\tH^{(\ell)}_{\rm TEN}
&=\!\operatorname{LN}\!\Big(\!\operatorname{ReLU}\!\big(\operatorname{MDE}_{\ell}(\tH^{(\ell-1)}_{\rm TEN})\big)\!\Big),\ell\!=\!1,\ldots,L,\\
\tT^{\rm cen}
&=\mathrm{Linear}\!\left(\tH^{(L)}_{\rm TEN};\mathbf W_{2}^{\rm cen},\mathbf b_{2}^{\rm cen}\right),
\end{align}
\end{subequations}
where $d_{\rm h}$ denotes the hidden dimension, \(\mathbf{W}_1^{\rm cen}\in\mathbb{R}^{D\times d_{\rm h}}\), \(\mathbf{W}_2^{\rm cen}\in\mathbb{R}^{d_{\rm h}\times F}\), and \(\mathbf{b}_1^{\rm cen}\in\mathbb{R}^{d_{\rm h}},\mathbf{b}_{2}^{\rm cen}\in\mathbb{R}^{F}\).
In each layer, \(\operatorname{MDE}_{\ell}(\cdot)\) is implemented as a linear combination of mean-and-repeat patterns defined on subsets of the two equivariant dimensions \(\{S, K\}\) \cite{wang2024}. Consequently, the overall mapping \(N_{\rm TEN}^{\rm cen}\) is permutation-equivariant: any permutation applied to the satellite and UT indices of \(\tX^{\rm cen}\) induces the same permutation on \(\tT^{\rm cen}\), which is consistent with the intrinsic symmetry of the considered multi-satellite multi-user system.

\subsubsection{Centralized FDA Module}
After obtaining \(\tT^{\rm cen}\in\mathbb{R}^{S\times K\times F}\) from the TEN network, we apply a lightweight FDA to project the features to \(\tC^{\rm cen}\in\mathbb{R}^{S\times K\times G}\) with \(G=6+2N\):
\begin{align}
\tC^{\rm cen} 
= N_{\rm FDA}^{\rm cen}\!\left(\tT^{\rm cen}\right)\in\mathbb{R}^{S\times K\times G}.
\end{align}
Specifically, the centralized FDA is implemented as one LN and two linear layers with a Gaussian error linear unit (GELU) activation \cite{Hendrycks2016_GELU} and a dropout layer:
\begin{subequations}
\begin{align}
\hat{\tT}^{\rm cen}
&= \operatorname{LN}\!\left(\tT^{\rm cen}\right),\\
\tH_{\rm FDA}^{\rm cen} 
&= \operatorname{GELU}\!\Big(\mathrm{Linear}\!\left(\hat{\tT}^{\rm cen};\mathbf{W}_{3}^{\rm cen},\mathbf{b}_{3}^{\rm cen}\right)\Big),\\
\tilde{\tH}_{\rm FDA}^{\rm cen} 
&= \operatorname{Dropout}\!\left(\tH_{\rm FDA}^{\rm cen}\right),\\
\tC^{\rm cen} 
&= \mathrm{Linear}\!\left(\tilde{\tH}_{\rm FDA}^{\rm cen};\mathbf{W}_{4}^{\rm cen},\mathbf{b}_{4}^{\rm cen}\right),
\end{align}
\end{subequations}
where \(\mathbf{W}_3^{\rm cen}\in\mathbb{R}^{F\times G}\), \(\mathbf{W}_4^{\rm cen}\in\mathbb{R}^{G\times G}\), and \(\mathbf{b}_3^{\rm cen},\mathbf{b}_4^{\rm cen}\in\mathbb{R}^{G}\).
\subsubsection{Centralized CFR Module}
Given the tensor \(\tC^{\rm cen}\) produced by the FDA module, the predicted tuple \(\mathsf{O}^{\rm cen}\) is obtained via the closed-form recovery mapping, i.e.,
\begin{align}
\mathsf{O}^{\rm cen}=N_{\rm CFR}^{\rm cen}\big(\tC^{\rm cen}\big).
\end{align}
The recovered variables are then substituted into the analytical expressions to compute the centralized precoding vectors $\{\mathbf{p}_{s,k}\}_{\forall s,k}$ in \eqref{re_precoding}. Notably, the receive vectors $\{\mathbf{b}_{s,k}\}_{\forall s,k}$ are predicted directly, since they are low-dimensional.

The whole centralized network design establishes an end-to-end prediction, ensuring both physical interpretability and learning flexibility. The network components are collectively referred to as TEN-FDA-CFR (TFC) network, and the overall process is summarized in \algref{net-wmmse-central}.
For the centralized network, the TEN network comprises \(L\) MDE layers, each incorporating \(\rho_{2}\) patterns, resulting in a complexity of \(\mathcal{C}_{\rm TEN}=\mathcal{O}(SK(D d_{\rm h}+L\rho_{2} d_{\rm h}^{2}+d_{\rm h}F))\). The FDA incurs \(\mathcal{C}_{\rm FDA}=\mathcal{O}(SK(FG+G^{2}))\). Finally, the CFR involves \(\mathcal{C}_{\rm CFR}=\mathcal{O}(SKM^{3})\), which typically dominates the overall inference complexity.

\begin{algorithm}[t]
  \caption{Centralized TFC-Net-Based WMMSE}
  \label{net-wmmse-central}
  \begin{algorithmic}[1]
    \STATE \textbf{Input:}$\{\Phi_{s,k},\!\Theta_{s,k},\!\beta_{s,k},\!\kappa_{s,k},\!\boldsymbol{\Sigma}_{s,k}\}_{\forall s,k}$,$\{\sigma_{k}^2\}_{\forall k}$,$\{P_s^{\rm sat}\}_{\forall s}$.
    \STATE Construct the centralized input tensors $\tA,\tR,\tB$.
    \STATE \textbf{Training:}
    \STATE Initialize $\mathcal{N}_{\rm cen}^{(0)}=\{N_{\rm TEN}^{\rm cen},N_{\rm FDA}^{\rm cen},N_{\rm CFR}^{\rm cen}\}^{(0)}$, $n=0$.
    \STATE \textbf{for} $n < N_{\rm ep}$ \textbf{do}
      \STATE \quad \textbf{forward:}
      \STATE \quad\quad Apply the TEN network: $\tT^{\rm cen} = N_{\rm TEN}^{\rm cen}\big(\tA,\tR,\tB\big)$.
      \STATE \quad\quad Apply the FDA module: $\tC^{\rm cen} = N_{\rm FDA}^{\rm cen}\big(\tT^{\rm cen}\big)$.
      \STATE \quad\quad Recover variables: $\mathsf{O}^{\rm cen} = N_{\rm CFR}^{\rm cen}\big(\tC^{\rm cen}\big)$.
      \STATE \quad\quad Obtain $\{\mathbf{p}_{s,k},\!\mathbf{b}_{s,k}\}_{\forall s,k}$ from $\mathsf{O}^{\rm cen}$ with \eqref{re_precoding}.
      \STATE \quad \textbf{backward:}
      \STATE \quad\quad Update $\mathcal{N}_{\rm cen}^{(n)}$ via gradient descent on $\mathcal{L}_{\rm wsr}$. 
      \STATE \quad \quad Update $n=n+1$.
    \STATE \textbf{end for}
   \STATE Store the learned variables of $\mathcal{N}_{\rm cen}$.
    \STATE \textbf{Inference:}  Use $\mathcal{N}_{\rm cen}$ to obtain $\{\mathbf{p}_{s,k},\mathbf{b}_{s,k}\}_{\forall s,k}$.
  \end{algorithmic}
\end{algorithm}
\section{Decentralized Multi-Satellite Transmission Architecture}\label{decentralized_arch}
In multi-satellite cooperative transmission scenarios, conventional coordination typically relies on a central satellite acting as the controller. In this architecture, each satellite uploads its locally measured CSI to the central satellite, which jointly computes the precoders and then broadcasts \(\{\mathbf{p}_{s,k}\}_{\forall s,k}\) back to the cooperating satellites for execution, as in \secref{centralized_optimization} and \secref{centralized_network}. However, centralized coordination can incur substantial ISL overhead and processing latency, which is particularly problematic in LEO systems where ISL resources are limited. In large LEO constellations, satellites located in counter-rotating orbital seams or across different orbital planes may experience large relative motion, making ISL establishment difficult and leading to link instability \cite{Chu2025,Beatriz2019}. To address this issue, we develop a decentralized transmission architecture that removes the reliance on a central satellite, as illustrated in \figref{central_decentralized_arch}. Specifically, satellites periodically exchange only SSI, including satellite position and attitude, which is low-rate and delay-tolerant. With SSI and UT GNSS, each satellite can locally anticipate the transmission behavior of neighboring satellites, thereby designing its own precoder without exchanging high-dimensional CSI. 
\vspace{-2mm}\subsection{Feasibility of Decentralized Multi-Satellite Architecture}
The feasibility of decentralized multi-satellite precoding can be revealed from the closed-form expression of the WMMSE precoder in \eqref{re_precoding}.
\begin{remark} The cross-satellite coupling in \eqref{re_precoding} is mainly embodied in a set of low-dimensional variables, including the aggregated interference terms $\{\varrho_{s,m}\}_{\forall m}$, the scalar coefficients $\{w_{s,k},u_{s,k}\}$, the per-satellite Lagrange multiplier $\lambda_s$, and the receive vector $\mathbf{b}_{s,k}$.
According to \eqref{mu}, \eqref{weight}, \eqref{WMMSE2}, and \eqref{beamforming_cf}, these variables are determined by the satellite steering vectors $\mathbf{g}_{s,k}$ and $\mathbf{d}_{s,k,0}$, the channel correlation matrices $\mathbf{R}_{s,k}^{\rm sat}$ and $\mathbf{R}_{s,k}^{\rm ut}$, the channel power $\beta_{s,k}$, the Rician factor $\kappa_{s,k}$, the noise power $\sigma_k^2$, and the satellite power budget $P_s^{\rm sat}$.\end{remark}
Importantly, the high-dimensional steering vectors can be constructed from low-dimensional angular parameters, which can in turn be estimated from the positions of satellites and UTs. Moreover, the high-dimensional correlation matrices of other satellites can be further approximated from these steering vectors. Therefore, instead of relying on high-dimensional CSI, the cross-satellite coupling variables can be predicted from local  satellite CSI, together with the angular information and approximate correlation matrices of other satellites, as well as the noise power and satellite power budgets. 

From an ISL communication perspective, centralized architectures incur substantial overhead due to high-dimensional CSI exchange and frequent precoder dissemination. In contrast, the above dependency analysis suggests that lightweight SSI, such as satellite position, attitude, and power budgets, may already provide sufficient information for each satellite to anticipate the influence of neighboring transmissions and enable a feasible local precoding design with lightweight inter-satellite interaction. Motivated by this observation, we propose a decentralized architecture in which each satellite performs local inference based on its local CSI and the SSI of other satellites. This significantly reduces inter-satellite overhead while preserving the essential coupling among satellites through the prediction of low-dimensional variables.
\begin{figure}[!t]
\centering
		\vspace{-9mm}
	%\hrulefill
\includegraphics[width=0.72\linewidth,trim=0.1cm 0.1cm 0.1cm 0.1cm,clip]{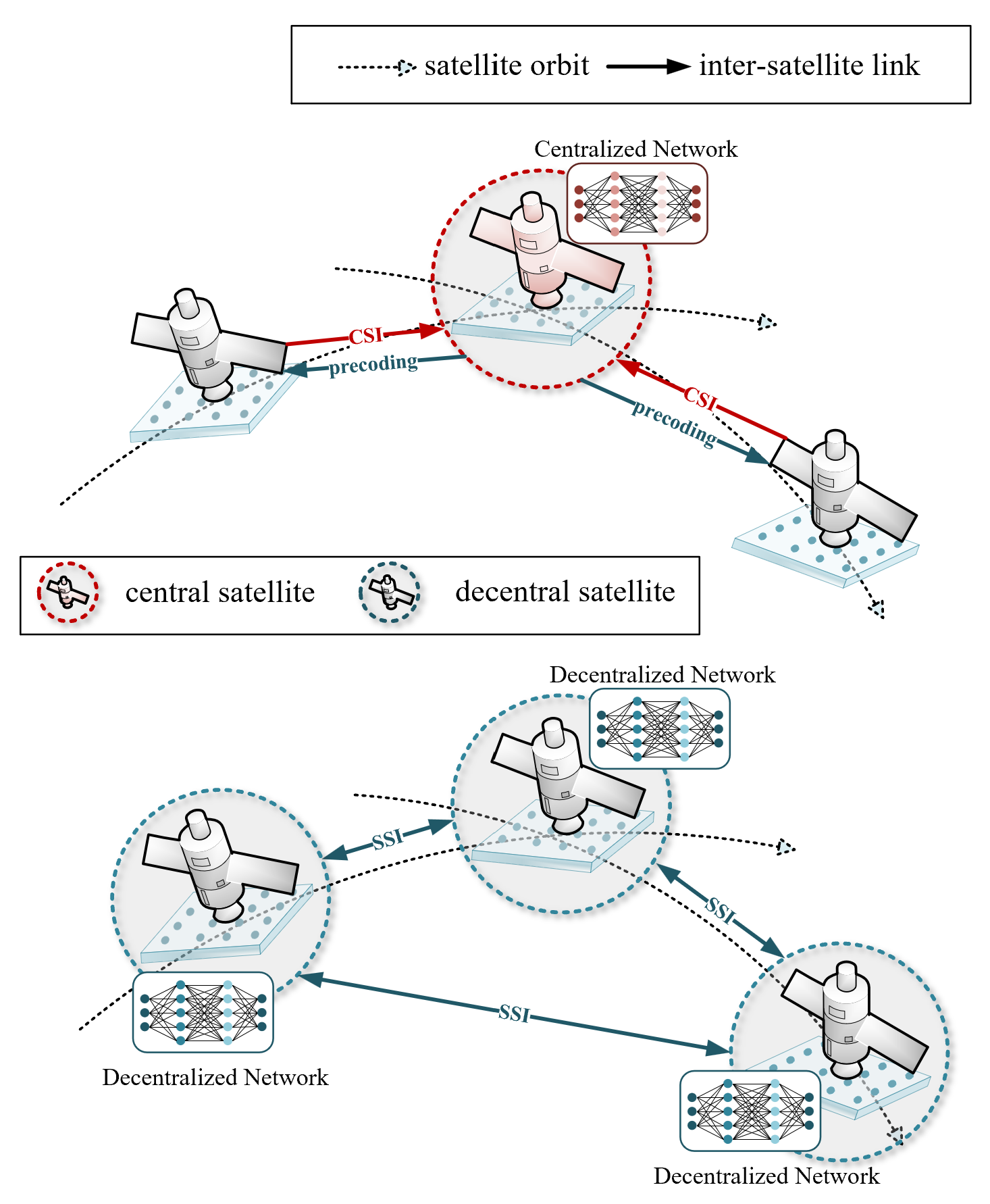}
 \vspace{-2mm}	\caption{Multi-satellite cooperative architectures: centralized and decentralized.}
\label{central_decentralized_arch}	\vspace{-6mm}
\end{figure}
\vspace{-1mm}
\subsection{Decentralized Multi-Satellite Transmission Mapping}
Assuming that only SSI is periodically exchanged among decentral satellites, the decentralized feature set is reconstructed with two components: the local-satellite features $\mathcal{H}_{s}^{\rm loc}$ and the other-satellite features $\mathcal{H}_{s}^{\rm oth}$:
\begin{align}
\!\!\!\mathcal{H}_{s}^{\rm loc} \!\triangleq \!
\left\{\Phi_{s,k},\Theta_{s,k},\beta_{s,k}, \kappa_{s,k}, \boldsymbol{\Sigma}_{s,k},\sigma_{k}^2\right\}_{\forall k \in \mathcal{K}}\! \cup \! \left\{P_s^{\rm sat}\right\},
\end{align}
\begin{align}
\mathcal{H}_{s}^{\rm oth} \!\triangleq \!
\{\Phi_{t,m},\Theta_{t,m}\}_{\forall t \in \mathcal S \setminus s, \forall m \in \mathcal{K}}
\!\cup\! \{\sigma_{m}^2\}_{\forall m \in \mathcal K}
\!\cup \! \{P_t^{\rm sat}\}_{\forall t \in \mathcal S \setminus s}.
\end{align}
The decentralized input set for the \(s\)-th satellite is:
\begin{align}
\mathcal{H}_{s}^{\rm dec} \triangleq\left\{\mathcal{H}_{s}^{\rm loc},\mathcal{H}_{s}^{\rm oth}\right\},  \forall s \in \mathcal{S}.
\end{align}
Based on $\mathcal{H}_{s}^{\rm loc}$, we construct the local-satellite input second-order tensors
\(\mathbf{A}_s^{\rm loc} \in \mathbb{R}^{K \times 6}\),
\(\mathbf{R}_s^{\rm loc} \in \mathbb{R}^{K \times 2(M^2+N^2)}\),
and \(\mathbf{B}_s^{\rm loc} \in \mathbb{R}^{K \times 2}\) as follows:
\begin{subequations}
\begin{align}
&{\mathbf{A}_{s}^{\rm loc}}_{[m,:]}
=\left[{\phi}^{\rm sat}_{s,m},{\theta}^{\rm sat}_{s,m},{\phi}^{\rm ut}_{s,m,0},{\theta}^{\rm ut}_{s,m,0},\sigma^2_{s,m},P_{s,m}^{\rm sat}\right],\\
&{\mathbf{R}_{s}^{\rm loc}}_{[m,:]}
=\big[ \operatorname{vec}(\Re(\mathbf{R}_{s,m}^{\rm ut})),\,
\operatorname{vec}(\Im(\mathbf{R}_{s,m}^{\rm ut})),\nonumber\\
&\hspace{5em}\operatorname{vec}(\Re(\mathbf{R}_{s,m}^{\rm sat})),\,
\operatorname{vec}(\Im(\mathbf{R}_{s,m}^{\rm sat})) \big],\\
&{\mathbf{B}_{s}^{\rm loc}}_{[m,:]}
=\left[\beta_{s,m},\,\kappa_{s,m}\right],
\end{align}
\end{subequations}
where $\sigma^2_{s,m} \triangleq \sigma_m^2$ and $P_{s,m}^{\rm sat} \triangleq P_{s}^{\rm sat}, \forall m \in \mathcal{K}$.
Based on $\mathcal{H}_{s}^{\rm oth}$, we construct the other-satellite input third-order tensors
\(\tA^{\rm oth}_s \in \mathbb{R}^{(S-1) \times K \times 6}\) and
\(\tU^{\rm oth}_s \in \mathbb{R}^{(S-1) \times K \times 2(M^2+N^2)}\) as:
\begin{subequations}
\begin{align}
&{\tA_s^{\rm oth}}_{[t,m,:]}
\!\!=\!\!\left[{\phi}^{\rm sat}_{t,m},{\theta}^{\rm sat}_{t,m},{\phi}^{\rm ut}_{t,m,0},{\theta}^{\rm ut}_{t,m,0},\sigma^2_{t,m}, P_{t,m}^{\rm sat}\right], \ \forall t \in \mathcal{S} \setminus s,\\
&{\tU_s^{\rm oth}}_{[t,m,:]}
\!\!=\!\!\big[\operatorname{vec}\big(\Re\big(\mathbf{d}_{t,m,0}\mathbf{d}_{t,m,0}^H\big)\big),
\operatorname{vec}\big(\Im\big(\mathbf{d}_{t,m,0}\mathbf{d}_{t,m,0}^H\big)\big),\nonumber\\
&\hspace{1.1em}
\quad\operatorname{vec}\big(\Re\big(\mathbf{g}_{t,m}\mathbf{g}_{t,m}^H\big)\big),
\operatorname{vec}\big(\Im\big(\mathbf{g}_{t,m}\mathbf{g}_{t,m}^H\big)\big)\big], \ \forall t \in \mathcal{S} \setminus s,
\end{align}
\end{subequations}
where $\sigma^2_{t,m} \triangleq \sigma_m^2, \forall t \in \mathcal{S} \setminus s$ and $P_{t,m}^{\rm sat} \triangleq P_{t}^{\rm sat}, \forall m \in \mathcal{K}$.
For the $s$-th satellite, we collect the low-dimensional variables associated with its local transmission into the predicted tuple:
\begin{align}
\mathsf{O}_{s}^{\rm dec}&\triangleq 
\Big(
\{w_{s,k},u_{s,k}\}_{\forall k},\,
\lambda_s,\,
\{\varrho_{s,k}\}_{\forall k},\,
\{\mathbf{b}_{s,k}\}_{\forall k}
\Big)\nonumber\\
&=\Big(\mathbf{w}_s,\boldsymbol{\mu}_s,\lambda_s,\boldsymbol{\varrho}_s,\mathbf{b}_s\Big).
\end{align}
The decentralized inference at the $s$-th satellite can be viewed as a mapping from the local and other-satellite CSI tensors
\(\big(\mathbf{A}_s^{\rm loc},\mathbf{R}_s^{\rm loc},\mathbf{B}_s^{\rm loc},\tA_s^{\rm oth},\tU_s^{\rm oth}\big)\)
to an optimal output tuple:
\begin{align}
G^{\rm dec}(\mathbf{A}_s^{\rm loc},\mathbf{R}_s^{\rm loc},\mathbf{B}_s^{\rm loc},\tA_s^{\rm oth},\tU_s^{\rm oth})
=\mathsf{O}_{s}^{\rm dec}.
\end{align}
Here, we aim to design \(G^{\rm dec}\) to preserve the permutation-induced invariance and equivariance inherent in the problem.
Let $\pi_{S-1}\in\mathbb{S}_{S-1}$ be arbitrary permutations the ``other-satellite'' indices (i.e., the first dimension of $\tA_s^{\rm oth}$ and $\tU_s^{\rm oth}$).  
\begin{ppn}\label{ppn_dec}
For any $\pi_K\in\mathbb{S}_K$ and $\pi_{S-1}\in\mathbb{S}_{S-1}$, the $s$-th satellite inference ($\forall s$) satisfies:
\begin{align}
R\langle & \mathsf{O}_{s}^{\rm dec};\mathbf{A}_s^{\rm loc},\mathbf{R}_s^{\rm loc},\mathbf{B}_s^{\rm loc},\tA_s^{\rm oth},\tU_s^{\rm oth}\rangle \nonumber\\
=R\langle &\mathsf{O}_{s}^{\rm dec}\!;\mathbf{A}_s^{\rm loc}\!\!,\mathbf{R}_s^{\rm loc}\!\!,\mathbf{B}_s^{\rm loc}\!\!,\pi_{S-1}\!\circ_1\tA_s^{\rm oth}\!\!,\pi_{S-1}\!\circ_1\!\tU_s^{\rm oth}\rangle, \label{dec_perm_obj_S}
\end{align}
\begin{align}
&R\langle  \mathsf{O}_{s}^{\rm dec};\mathbf{A}_s^{\rm loc},\mathbf{R}_s^{\rm loc},\mathbf{B}_s^{\rm loc},\tA_s^{\rm oth},\tU_s^{\rm oth}\rangle \nonumber\\
=&R\langle \pi_K\!\circ_1\mathbf{w}_s,\pi_K\!\circ_1\boldsymbol{\mu}_s,\lambda_s,\pi_K\!\circ_1\boldsymbol{\varrho}_s,\pi_K\!\circ_1\mathbf{b}_s;\nonumber\\
&\!\!\!\!\!\!\pi_K\!\circ_1\!\mathbf{A}_s^{\rm loc}\!\!,\pi_K\!\circ_1\!\mathbf{R}_s^{\rm loc}\!\!,\pi_K\!\circ_1\!\mathbf{B}_s^{\rm loc}\!\!,
\pi_K\!\circ_2\!\tA_s^{\rm oth}\!\!,\pi_K\!\circ_2\!\tU_s^{\rm oth}\rangle. 
\label{dec_perm_obj_K}
\end{align}
\end{ppn}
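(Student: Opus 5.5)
The plan is to make the objective $R\langle\cdot\rangle$ explicit, in the same way as for Proposition~\ref{ppn_cen}. The value attached to the $s$-th satellite is the local WMMSE objective $\sum_{k\in\mathcal K}\big(w_{s,k}e_{s,k}-\log w_{s,k}\big)$. It is evaluated at the precoders recovered from $\mathsf{O}_s^{\rm dec}$ through \eqref{re_precoding} and the predicted receive vectors $\mathbf b_{s,k}$, with $e_{s,k}$ built from $\xi_{s,k},\zeta_{s,k},\eta_{s,k}$. Every ingredient is a sum, over $m\in\mathcal K$ and over $t\in\mathcal S\setminus s$, of terms that depend on the indices only through the pairs $(t,m)$ and $(s,m)$. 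The proof is therefore an index-relabeling argument on these finite sums, done separately for the two permutation types.

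For \eqref{dec_perm_obj_S}, I would first observe that the other-satellite data $\tA_s^{\rm oth},\tU_s^{\rm oth}$ enter the local objective only through aggregates of the form $\sum_{t\in\mathcal S\setminus s}(\cdot)$. These are the covariance $\mathbf R_{s,k}$ in $\eta_{s,k}$ and the coupling inside $\varrho_{s,m}$ (and in $\mathbf b_{s,k}$ via \eqref{beamforming_cf}). Because the output tuple $\mathsf O_s^{\rm dec}$ carries no other-satellite index, applying $\pi_{S-1}\circ_1$ to both other-satellite tensors only reorders the summands of each $\sum_{t\neq s}$. Commutativity of finite sums then leaves every $\xi_{s,k},\zeta_{s,k},\eta_{s,k}$, and hence the objective, unchanged. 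This gives invariance rather than equivariance, which is why the output is not permuted in \eqref{dec_perm_obj_S}.

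For \eqref{dec_perm_obj_K}, I would substitute $k\mapsto\pi_K(k)$ simultaneously in all UT-indexed quantities: rows of $\mathbf A_s^{\rm loc},\mathbf R_s^{\rm loc},\mathbf B_s^{\rm loc}$, the second dimension of $\tA_s^{\rm oth},\tU_s^{\rm oth}$, and the entries of $\mathbf w_s,\boldsymbol\mu_s,\boldsymbol\varrho_s,\mathbf b_s$. $\lambda_s$ is left fixed because it carries no UT index. I would then check three facts. (i) The matrix $\sum_m \mathbf g_{s,m}\mathbf g_{s,m}^H\varrho_{s,m}+\lambda_s\mathbf I_M$ in \eqref{re_precoding} is a sum over $m$, so it is invariant. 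Consequently $\mathbf p_{s,k}$ computed from the permuted data equals the original $\mathbf p_{s,\pi_K^{-1}(k)}$, i.e.\ the precoders are permuted equivariantly. (ii) The per-UT quantities $\xi_{s,k},\zeta_{s,k}$ are permuted accordingly. In $\eta_{s,k}$ the intra-satellite interference $\sum_{m\neq k}$ and the inter-satellite covariance $\mathbf R_{s,k}$ (a sum over $m$ for each $t$) are likewise permuted equivariantly, since removing the index $k$ commutes with relabeling. (iii) The objective is a sum over $k$ of terms depending only on index-$k$ quantities, so it is invariant under the reordering.

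The main obstacle is step (ii) in the decentralized setting. There, the other-satellite precoders $\mathbf p_{t,m}$ appearing in $\mathbf R_{s,k}$ are not in $\mathsf O_s^{\rm dec}$ and are only implicitly determined by $\tA_s^{\rm oth},\tU_s^{\rm oth}$. I would handle this by treating them as the outputs of the same permutation-consistent rule, namely the closed-form solution or the mapping $G^{\rm dec}$ evaluated at satellite $t$. Proposition~\ref{ppn_cen} applied to satellite-$t$ inputs then gives that those implied precoders transform equivariantly under $\pi_K$ and are independent of how the other satellites are ordered. With that in place, both identities follow as described above, and closedness of the optimal set under these actions yields the corresponding invariance and equivariance of $G^{\rm dec}$.
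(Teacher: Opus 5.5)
The paper does not prove this proposition. It says the proof is omitted and refers to \cite{wang2024,wang2026}, whose argument is the same index-relabeling over finite sums that you propose. Your reading of \eqref{dec_perm_obj_S} as an invariance, because the output tuple has no other-satellite index, is correct. Your steps (i)--(iii) for \eqref{dec_perm_obj_K} are also correct and carry the substance of the proof.

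The step that fails as written is your last paragraph.
\begin{itemize}
\item Getting the behaviour of the other satellites' precoders $\mathbf p_{t,m}$ from $G^{\rm dec}$ is circular. The paper derives \eqref{TE_dec_S}--\eqref{TE_dec_K} from this very proposition.
\item Proposition~\ref{ppn_cen} is about objective values of the joint centralized problem with inputs $(\tA,\tR,\tB)$. It says nothing about a map from satellite $t$'s decentralized inputs to its precoders.
\item The closed-form option is not circular (it is your step (i) at satellite $t$), but it still assumes how satellite $t$'s own variables transform.
\item \eqref{dec_perm_obj_S} does not need satellite $t$'s precoders to be independent of some ordering. It needs them to move together with row $t$.
\end{itemize}

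None of this is necessary. In $\eta_{s,k}$ the precoders of satellite $t\neq s$ enter only through
\[
\sum_{m\in\mathcal K}|\mathbf g_{t,k}^H\mathbf p_{t,m}|^2=\mathbf g_{t,k}^H\mathbf Q_t\,\mathbf g_{t,k},\qquad \mathbf Q_t\triangleq\sum_{m\in\mathcal K}\mathbf p_{t,m}\mathbf p_{t,m}^H ,
\]
and $\mathbf Q_t$ does not depend on how satellite $t$ labels its streams. So under $\pi_K$ you only need $\mathbf g_{t,k}$, $\mathbf R^{\rm ut}_{t,k}$ and $\sigma_k^2$ to be relabeled in $k$, which they are. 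Under $\pi_{S-1}$ you only need $\mathbf Q_t$ to travel with row $t$.

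What you should state explicitly is the convention that makes $R\langle\cdot\rangle$ well defined. It is not a function of the listed arguments alone, because $\mathbf R_{s,k}$ needs $\mathbf Q_t$ and $\mathbf R^{\rm ut}_{t,k}$. The latter needs $\beta_{t,k},\kappa_{t,k},\boldsymbol\Sigma_{t,k}$, which are not in $\tU_s^{\rm oth}$. Treat these as exogenous per-slice data permuted along the other-satellite dimension, or replace them with the steering-vector surrogate built from $\tU_s^{\rm oth}$.

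Finally, $\varrho_{s,m}$ and $\mathbf b_{s,k}$ are fixed components of $\mathsf O_s^{\rm dec}$ when $R$ is evaluated. They are not channels through which $\tA_s^{\rm oth}$ and $\tU_s^{\rm oth}$ enter; only $\mathbf R_{s,k}$ is.
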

\begin{pf}
The proof is omitted for clarity \cite{wang2024,wang2026}.
\end{pf}
Based on \eqref{dec_perm_obj_S} and \eqref{dec_perm_obj_K}, the mapping $G^{\rm dec}$ satisfies:
\begin{align}
&G^{\rm dec}(\mathbf{A}_s^{\rm loc},\mathbf{R}_s^{\rm loc},\mathbf{B}_s^{\rm loc},\pi_{S-1}\!\circ_1\tA_s^{\rm oth},\pi_{S-1}\!\circ_1\tU_s^{\rm oth})
\nonumber\\
&=G^{\rm dec}(\mathbf{A}_s^{\rm loc},\mathbf{R}_s^{\rm loc},\mathbf{B}_s^{\rm loc},\tA_s^{\rm oth},\tU_s^{\rm oth}), 
\label{TE_dec_S}\\
&\!\!\!G^{\rm dec}(\pi_K\!\circ_1\mathbf{A}_s^{\rm loc}\!,\pi_K\!\circ_1\!\mathbf{R}_s^{\rm loc}\!,\pi_K\!\circ_1\!\mathbf{B}_s^{\rm loc}\!,\pi_K\!\circ_2\!\tA_s^{\rm oth}\!,\pi_K\!\circ_2\!\tU_s^{\rm oth})
\nonumber\\
&=\pi_K\!\circ_1 G^{\rm dec}(\mathbf{A}_s^{\rm loc},\mathbf{R}_s^{\rm loc},\mathbf{B}_s^{\rm loc},\tA_s^{\rm oth},\tU_s^{\rm oth}),
\label{TE_dec_K}
\end{align}
\begin{remark}
Note that the permutation equivariance of \(G^{\rm dec}\) in \eqref{TE_dec_S} is defined with respect to the other \((S-1)\) satellites, rather than the full set of \(S\) satellites considered in the centralized inference setting \eqref{TE_sat_cen}.
\end{remark}
According to \eqref{dec_perm_obj_S}, permuting the order of the other-satellite set does not affect the inference result, and the output does not include an other-satellite dimension.
Denote the pooling-by-multihead-attention (PMA) operator $\mathrm{PMA}(\cdot)$ and the multi-dimensional invariance (MDI) module $\mathrm{MDI}_{\{1\}}(\cdot)$ \cite{wang2024}. 
Considering $\tX$ a tensor with first dimension of $S-1$, we model an invariant mapping $f: \mathbb{R}^{(S-1)\times K \times D_x} \rightarrow \mathbb{R}^{K \times D_x}$, and its expression is given by: $
\mathbf{X}=\mathrm{MDI}_{\{1\}}\!\left(\tX\right)
=[\mathrm{PMA}\circ_{1}\tX] \in \mathbb{R}^{K \times D_x}$. $\circ_{1}$ denotes the computation of PMA along the first dimension, and $[\cdot]$ removes empty dimensions of size 1 generated during this process.
\vspace{-2mm}\section{Learning-Based Decentralized WMMSE Network Design}\label{decentralized_network}
As shown in \figref{central_decentralized_network}, by leveraging the per-satellite WMMSE mapping structure, the proposed decentralized network is deployed locally at each satellite with shared decentralized network parameters, enabling parallel inference across satellites without centralized coordination while naturally generalizing to different scenarios.
\vspace{-2mm}\subsection{Decentralized Multi-Satellite Network Design}
For the \(s\)-th satellite, the network adopts a three-stage pipeline consisting of a dual-branch TEN (DB-TEN) network, a decentralized FDA module, and a decentralized CFR module, as in \figref{central_decentralized_network}. Specifically, the DB-TEN processes the local-satellite and other-satellite features in parallel; the other-satellite branch further incorporates an MDI module to aggregate information along the other-satellite dimension. The resulting fused features are then mapped by the decentralized FDA to the low-dimensional variables, which are finally recovered by the decentralized CFR.
\subsubsection{DB-TEN Network}
To accommodate the heterogeneous feature structures in the decentralized architecture, we employ two TEN branches with different tensor orders to process the local-satellite and other-satellite inputs, respectively.

We first apply a 1D-equivariant TEN (with respect to the UT dimension) to the local-satellite input tensors:
\begin{align}
\mathbf{T}^{\rm loc}_s=N_{\rm TEN}^{\rm loc}\big(\mathbf{A}^{\rm loc}_s,\mathbf{R}^{\rm loc}_s,\mathbf{B}^{\rm loc}_s\big) \in \mathbb{R}^{K \times F^{\rm loc}},
\end{align}
The local TEN adopts an architecture with an initial linear layer, followed by \(L\) stacked MDE-ReLU-LN blocks, and a final linear layer, given by: 
\begin{subequations}
\begin{align}
\mathbf{X}_s^{\rm loc}&=[\mathbf{A}_s^{\rm loc},\mathbf{R}_s^{\rm loc},\mathbf{B}_s^{\rm loc}]_2\in\mathbb{R}^{K\times D^{\rm loc}},\\
\mathbf{H}_{s,\rm loc}^{(0)} 
&=\mathrm{Linear}\!\left(\mathbf{X}_s^{\rm loc};\mathbf{W}_{1}^{\rm loc},\mathbf{b}_{1}^{\rm loc}\right)
\in\mathbb{R}^{K\times d_{\rm h}^{\rm loc}},\\
\mathbf{H}_{s,\rm loc}^{(\ell)} 
&\!=\!\operatorname{LN}\!\Big(\!\!\operatorname{ReLU}\!\big(\!\operatorname{MDE}^{\rm loc}_{\ell}(\mathbf{H}_{s,\rm loc}^{(\ell-1)})\big)\!\Big),\ell\!=\!1,\ldots,L,\\
\mathbf{T}_s^{\rm loc} 
&=\mathrm{Linear}\!\left(\mathbf{H}_{s,\rm loc}^{(L)};\mathbf{W}_{2}^{\rm loc},\mathbf{b}_{2}^{\rm loc}\right),
\end{align}\end{subequations}
where \(D^{\rm loc}=8+2M^2+2N^2\), and \(d_{\rm h}^{\rm loc}\) denotes the hidden dimension of the local TEN network. \(\mathbf{W}_1^{\rm loc}\in\mathbb{R}^{D^{\rm loc}\times d_{\rm h}^{\rm loc}}\), \(\mathbf{W}_2^{\rm loc}\in\mathbb{R}^{d_{\rm h}^{\rm loc}\times F^{\rm loc}}\), and \(\mathbf{b}_1^{\rm loc}\in\mathbb{R}^{d_{\rm h}^{\rm loc}},\mathbf{b}_{2}^{\rm loc}\in\mathbb{R}^{F^{\rm loc}}\). 
For the other-satellite input tensors, we apply a 2D-equivariant TEN (with respect to both the other-satellite and UT dimensions):
\begin{align}
\tT_s^{\rm oth}=N_{\rm TEN}^{\rm oth}\big(\tA_s^{\rm oth},\tU_s^{\rm oth}\big) \in \mathbb{R}^{(S-1) \times K \times F^{\rm oth}},
\end{align} 
The network \(N_{\rm TEN}^{\rm oth}\) shares the same building blocks as \(N_{\rm TEN}^{\rm loc}\):
\vspace{-4mm}\begin{subequations}
\begin{align}
\tX_s^{\rm oth}
&=[\tA_s^{\rm oth},\tU_s^{\rm oth}]_3\in\mathbb{R}^{(S-1)\times K\times D^{\rm oth}},\\
\tH_{s,\rm oth}^{(0)} 
&=\!\mathrm{Linear}\!\left(\!\tX_s^{\rm oth};\mathbf{W}_{1}^{\rm oth},\mathbf{b}_{1}^{\rm oth}\!\right)
\!\in\!\mathbb{R}^{(S-1)\times K\times d_{\rm h}^{\rm oth}},\\
\tH_{s,\rm oth}^{(\ell)} 
&=\!\operatorname{LN}\!\Big(\!\!\operatorname{ReLU}\!\big(\!\operatorname{MDE}^{\rm oth}_{\ell}(\tH_{s,\rm oth}^{(\ell-1)})\big)\!\Big),\ell\!=\!1,\ldots,L,\\
\tT_s^{\rm oth} 
&=\mathrm{Linear}\!\left(\tH_{s,\rm oth}^{(L)};\mathbf{W}_{2}^{\rm oth},\mathbf{b}_{2}^{\rm oth}\right).
\end{align}\end{subequations}
\(D^{\rm oth}=6+2M^2+2N^2\), and \(d_{\rm h}^{\rm oth}\) denotes the hidden dimension of the other-satellite TEN network.
\(\mathbf{W}_1^{\rm oth}\in\mathbb{R}^{D^{\rm oth}\times d_{\rm h}^{\rm oth}}\), \(\mathbf{W}_2^{\rm oth}\in\mathbb{R}^{d_{\rm h}^{\rm oth}\times F^{\rm oth}}\), and \(\mathbf{b}_1^{\rm oth}\in\mathbb{R}^{d_{\rm h}^{\rm oth}},\mathbf{b}_{2}^{\rm oth}\in\mathbb{R}^{F^{\rm oth}}\). 
To remove the dependence on the ordering of the other-satellite set, we apply an MDI module  $N_{\rm MDI}^{\rm oth}(\cdot)=\mathrm{MDI}_{\{1\}}(\cdot)$ that performs permutation-invariant attention pooling along the other-satellite dimension:
\begin{align}
\mathbf{T}_s^{\rm oth} = N_{\rm MDI}^{\rm oth}(\tT_s^{\rm oth}) \in \mathbb{R}^{K\times F^{\rm oth}}.
\end{align}
Finally, we fuse the two branches by concatenation along the feature dimension:
\begin{align}
\mathbf{T}_s^{\rm dec}=[\mathbf{T}_s^{\rm loc},\mathbf{T}_s^{\rm oth}]_2
\in\mathbb{R}^{K\times(F^{\rm loc}+F^{\rm oth})}.
\end{align}
The local branch is permutation-equivariant over the UT dimension, while the other branch is permutation-equivariant over both the other-satellite and UT dimensions and becomes permutation-invariant over the other-satellite dimension after MDI aggregation.
\subsubsection{Decentralized FDA Module}
Given the fused features \(\mathbf{T}_s^{\rm dec}\), we apply a decentralized FDA to obtain \(\mathbf{C}_s^{\rm dec}\in\mathbb{R}^{K\times G}\):
\begin{align}
\mathbf{C}_s^{\rm dec}=N_{\rm FDA}^{\rm dec}(\mathbf{T}_s^{\rm dec})\in\mathbb{R}^{K\times G}.
\end{align}
The decentralized FDA module also comprises an LN layer, two linear layers, a GELU activation, and dropout:
\begin{subequations}
\begin{align}
\hat{\mathbf{T}}_{s,\rm FDA}^{\rm dec} 
&= \operatorname{LN}(\mathbf{T}_s^{\rm dec}),\\
\mathbf{H}_{s,\rm FDA}^{\rm dec} 
&= \operatorname{GELU}\!\Big(\mathrm{Linear}\!\left(\hat{\mathbf{T}}_{s,\rm FDA}^{\rm dec};\mathbf{W}_{1}^{\rm dec},\mathbf{b}_{1}^{\rm dec}\right)\Big),\\
\tilde{\mathbf{H}}_{s,\rm FDA}^{\rm dec} 
&= \operatorname{Dropout}\!\left(\mathbf{H}_{s,\rm FDA}^{\rm dec}\right),\\
\mathbf{C}_s^{\rm dec} 
&= \mathrm{Linear}\!\left(\tilde{\mathbf{H}}_{s,\rm FDA}^{\rm dec};\mathbf{W}_{2}^{\rm dec},\mathbf{b}_{2}^{\rm dec}\right),
\end{align}\end{subequations}
 we have \(\mathbf{W}_{1}^{\rm dec}\in\mathbb{R}^{(F^{\rm loc}+F^{\rm oth})\times G}\),
\(\mathbf{W}_{2}^{\rm dec}\in\mathbb{R}^{G\times G}\),
and \(\mathbf{b}_{1}^{\rm dec},\mathbf{b}_{2}^{\rm dec}\in\mathbb{R}^{G}\).
The FDA performs dimension mapping, transforming the fused features into the desired dimension.
\subsubsection{Decentralized CFR Module}
The predicted tuple for the \(s\)-th satellite is obtained via the closed-form recovery mapping:
\begin{align}
\mathsf{O}_s^{\rm dec}=N_{\rm CFR}^{\rm dec}\big(\mathbf{C}_s^{\rm dec}\big).
\end{align}
The recovered variables are then substituted into \eqref{re_precoding} to reconstruct the decentralized precoders \(\{\mathbf{p}_{s,k}\}_{\forall k}\) and receive vectors \(\{\mathbf{b}_{s,k}\}_{\forall k}\). In this way, each satellite performs local inference independently, eliminating the need for centralized CSI collection and precoder dissemination, thereby enabling fully decentralized operation.

Overall, the decentralized network establishes an end-to-end prediction framework that preserves physical consistency and interpretability while enabling coordinated precoding across the multi-satellite massive MIMO system, as summarized in \algref{net-wmmse-decentralized}. Its per-satellite inference complexity consists of four components: the local-satellite TEN branch, with complexity \(\mathcal{C}_{\rm TEN}^{\rm loc}=\mathcal{O}(K(D^{\rm loc} d_{\rm h}^{\rm loc}+L\rho_{1}(d_{\rm h}^{\rm loc})^{2}+d_{\rm h}^{\rm loc}F^{\rm loc}))\); the other-satellite TEN branch, with \(\mathcal{C}_{\rm TEN}^{\rm oth}=\mathcal{O}((S-1)K(D^{\rm oth} d_{\rm h}^{\rm oth}+L\rho_{2}(d_{\rm h}^{\rm oth})^{2}+d_{\rm h}^{\rm oth}F^{\rm oth}))\); the MDI aggregation and FDA head, with \(\mathcal{C}_{\rm MDI}^{\rm oth}=\mathcal{O}(K(S-1)(F^{\rm oth})^{2})\) and \(\mathcal{C}_{\rm FDA}=\mathcal{O}(K((F^{\rm loc}+F^{\rm oth})G+G^{2}))\), respectively; and the CFR module, with \(\mathcal{C}_{\rm CFR}=\mathcal{O}(KM^{3})\), which typically dominates the per-satellite inference cost. Since all satellites perform inference locally, the overall computation can be executed in parallel across satellites, and the practical complexity is therefore determined by the per-satellite cost.

\begin{algorithm}[t]
  \caption{Decentralized TFC-Net-Based WMMSE}
  \label{net-wmmse-decentralized}
  \begin{algorithmic}[1]
    \STATE \textbf{Input:} $\{\{\Phi_{s,k},\Theta_{s,k},\beta_{s,k}, \kappa_{s,k},\boldsymbol{\Sigma}_{s,k},\sigma_{k}^2\}_{\forall k } \cup P_s^{\rm sat}\}_{\forall s}$, \\
    $\{\{\{\Phi_{t,m},\Theta_{t,m},\sigma_{m}^2\}_{\forall m} \cup P_t^{\rm sat}\}_{\forall t \in \mathcal S \setminus s}\}_{\forall s}$.
    \STATE Construct $\{\mathbf{A}_s^{\rm loc},\mathbf{R}_s^{\rm loc},\mathbf{B}_s^{\rm loc},\tA_s^{\rm oth},\tU_s^{\rm oth}\}_{\forall s}$.
    \STATE \textbf{Training:}
    \STATE \!$\mathcal{N}_{\rm dec}^{(0)}\!=\!\!\{\!N_{\rm TEN}^{\rm loc}, N_{\rm TEN}^{\rm oth}, N_{\rm MDI}^{\rm oth}, N_{\rm FDA}^{\rm dec},N_{\rm CFR}^{\rm dec}\!\}^{(0)}\!$ and $n\!=\!0$.
    \STATE \textbf{for} $n < N_{\rm ep}$ \textbf{do}
      \STATE \quad \textbf{forward:}
      \STATE \quad\quad \textbf{for} each satellite $s \in \mathcal S$ \textbf{do}
      \STATE \quad\quad\quad Apply DB-TEN: $\mathbf{T}^{\rm loc}_s = N_{\rm TEN}^{\rm loc}\big(\mathbf{A}^{\rm loc}_s,\mathbf{R}^{\rm loc}_s,\mathbf{B}^{\rm loc}_s\big)$,\\ \quad\quad\quad \quad\quad\quad \quad\quad \quad\quad
      $\tT_s^{\rm oth}  = N_{\rm TEN}^{\rm oth}\big(\tA_s^{\rm oth},\tU_s^{\rm oth}\big)$,\\
      \quad\quad\quad \quad\quad\quad \quad \quad\quad\quad 
      $\mathbf{T}_s^{\rm oth}=N_{\rm MDI}^{\rm oth}\big(\tT_s^{\rm oth}\big)$.
      \STATE \quad\quad\quad Apply FDA mudule: $\mathbf{C}^{\rm dec}_s = N_{\rm FDA}^{\rm dec}\big(\mathbf{T}^{\rm dec}_s\big)$.
      \STATE \quad\quad\quad Recover variables: $\mathsf{O}^{\rm dec}_s = N_{\rm CFR}^{\rm dec}\big(\mathbf{C}^{\rm dec}_s\big)$.
      \STATE \quad\quad\quad Get $\{\mathbf{p}_{s,k}, \mathbf{b}_{s,k}\}_{\forall k}$ from $\mathsf{O}^{\rm dec}_s$ with \eqref{re_precoding}.
      \STATE \quad\quad \textbf{end for}
      \STATE \quad \textbf{backward:}
      \STATE \quad\quad Update $\mathcal{N}_{\rm dec}^{(n)}$ via gradient descent on $\mathcal{L}_{\rm wsr}$. 
     \STATE \quad \quad Update $n=n+1$.
    \STATE \textbf{end for}
   \STATE Store the learned variables of $\mathcal{N}_{\rm dec}$.
    \STATE \textbf{Inference:}  Use $\mathcal{N}_{\rm dec}$ to obtain $\{\mathbf{p}_{s,k},\mathbf{b}_{s,k}\}_{\forall s,k}$.
  \end{algorithmic}
\end{algorithm}
\vspace{-2mm}\subsection{Loss Function for Unsupervised Learning}
In the unsupervised learning paradigm, the network is optimized directly toward the physical objective of maximizing the system sum rate, rather than relying on pre-computed closed-form labels. 
At each training iteration, the model predicts the precoding vectors 
$\{\mathbf{p}_{s,k}\}_{\forall s,k}$ 
and receive vectors 
$\{\mathbf{b}_{s,k}\}_{\forall s,k}$ 
for all satellite-UT pairs. The achievable rate is evaluated based on the iCSI.
The unsupervised objective is formulated as the negative sample-averaged WSR:
\begin{align}
\mathcal{L}_{\rm wsr} 
= -\frac{1}{N_{\rm sam}}\sum_{n=1}^{N_{\rm sam}}\sum_{s=1}^{S}\sum_{k=1}^{K}
a_{s,k}\,R_{s,k}^{(n)},
\end{align}
where $N_{\rm sam}$ represents the sample number. 
This loss function is fully differentiable with respect to all network outputs, allowing end-to-end optimization through standard backpropagation. 
During training, the optimizer minimizes $\mathcal{L}_{\rm wsr}$, which is equivalent to maximizing the achievable system sum rate of the multi-satellite system.
\vspace{-2mm}\subsection{Dataset and Training Details}
The dataset comprises 10,000 samples, with 7,000 used for training, 2,000 for validation, and an additional 1,000 for testing. Each sample randomly selects the center of the cooperative transmission region within the global coverage of the constellation, generates UTs, and determines the cooperating satellites. \textcolor{black}{It further includes data corresponding to five transmit power levels, namely $P_s^{\rm sat} \in  [-10, -5, 0, 5, 10]\, \text{dBW}$, which are randomly selected during training to enable the network to operate under various link budgets resulting from various transceiver configurations \cite{Xiang2024,3GPPTR38.821,wang2025MSMS}.}
\vspace{-2mm}
\section{Numerical Results}\label{simulation}
We employ the QuaDRiGa channel simulator to generate the propagation scenarios and radio channel parameters. Specifically, the channel parameters are obtained using the QuaDRiGa\_NTN\_Urban\_LOS\_scenario \cite{Burkhardt2014}. This simulator, with appropriate parameter calibration, is consistent with the channel model considered in this work as well as the Third Generation Partnership Project specifications. Monte Carlo simulations are conducted by randomly selecting a point within the constellation coverage as the center, defining a circular region with the specified radius, and choosing the \(S\) nearest satellites to that center for multi-satellite cooperative transmission. In addition, all satellites are assumed to have identical power constraints, i.e., $P_s^{\rm sat} = P^{\rm sat}, \forall s$.
The remaining simulation parameters are summarized in \tabref{param}.
	\begin{table}[!t]
    \centering
    \caption{Multi-Satellite System Parameters \cite{FCC_SpaceX_Gen2_2021,3GPPTR38.821,Xiang2024,wang2026}}
    \label{param}
    \resizebox{0.7\columnwidth}{!}{%
    \begin{tabular}{cc}
    \toprule
    \textbf{Parameter}  &  \textbf{Value}  \\
    \midrule
    Constellation Type & Walker-Delta \\
    Satellite Orbit Altitude & $600 \ \rm{km}$ \\
    Orbital Planes & 28 \\
    Satellites Per Plane & 60 \\
    Inclination  & $53^{\circ}$\\
    Multi-Satellite Coverage Radius & $800 \ \rm{km}$ \\
    Number of Cooperating Satellites & 2,3,4\\
    \midrule
    Number of UTs & 6,12,18,24,30 \\ 
    Distribution of UTs & Uniform\\
    Velocity of UTs & $5 \ \rm{km/h}$ \\
    \midrule
    Carrier Frequency & $2\ \rm{GHz}$\\
    Subcarrier Spacing & $30 \ \rm {kHz}$ \\
    System Bandwidth (DL) & $20\ \rm{MHz}$\\
    \midrule
    Tranmit Antenna Number $M$ & $64$\\
    Receive Antenna Number $N$ & $4$\\
    Per-Antenna Gain $ G_{\rm T}$, $G_{\rm R}$ & $6 \ \rm{dBi}$, $ 0\ \rm{dBi}$ \\
    Noise Figure $F$ & $7 \ \rm {dB}$\\
    Noise Temperature $T$ & $290 \ \rm {K}$\\
    \bottomrule
    \end{tabular}%
    }
\end{table}
\subsection{Performance Comparison}
This subsection compares the following schemes:
\begin{itemize}
\item \textbf{Sep-MRT and Sep-MMSE}: Conventional MRT and MMSE are applied independently at each satellite without multi-satellite cooperation, following \cite{Peel2005}. The receive vectors are set to the receive steering vectors.
    \item \textbf{Sep-Opt-WM}: Separate WMMSE carried in each satellite without cooperation among multi-satellites, which ignores inter-satellite interference \cite{Christensen2008, Shi2011}.  
\item \textbf{Cen-Opt-WM}: A centralized WMMSE scheme that computes the precoders through alternating optimization based on sCSI, as in \secref{centralized_optimization}, similar to \cite{Xiang2024}.
\item \textbf{Cen-TFC-WM}: A centralized TFC-Net-based WMMSE scheme, where all satellites upload their local sCSI to a central satellite, which jointly infers the precoding vectors using the centralized network described in \secref{centralized_network}.
\item \textbf{Dec-TFC-WM}: A decentralized TFC-Net-based WMMSE scheme, where each satellite performs local learning and inference, enabling coordinated precoding without central satellite CSI collection or precoder dissemination, as in \secref{decentralized_network}.
\end{itemize}

\figref{1_wsr_pt_3_12} compares the sum-rate performance of different schemes versus the transmit power with $S=3$ and $K=12$. The multi-satellite cooperative transmission schemes that explicitly account for inter-satellite interference, i.e., Cen-Opt-WM, Cen-TFC-WM, and Dec-TFC-WM, significantly outperform the separate-satellite baselines, including Sep-Opt-WM, Sep-MMSE, and Sep-MRT. In particular, the proposed Cen-TFC-WM achieves a sum rate close to the optimal Cen-Opt-WM with substantially lower online inference complexity than iterative optimization. Moreover, Dec-TFC-WM provides substantial gains, and its advantage over Sep-Opt-WM becomes more pronounced as $P^{\rm sat}$ increases. Notably, Dec-TFC-WM can achieve most of the sum-rate performance of Cen-TFC-WM, demonstrating the effectiveness of the proposed decentralized scheme. 
\begin{figure}[!t]
		\centering  \vspace{-4mm}
\includegraphics[width=0.75\linewidth,trim=3.5cm 9.4cm 3.5cm 9.5cm,clip]{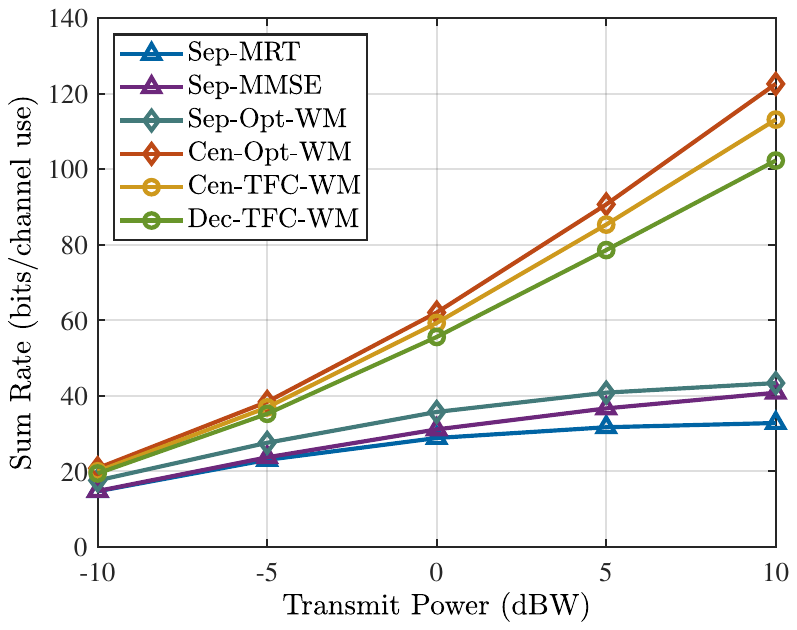}	 \vspace{-2mm}
\caption{Multi-satellite sum rate vs ${P}^{\rm sat}$.}
	\label{1_wsr_pt_3_12}	\vspace{-3mm}
\end{figure}
  \begin{figure}[!t]
		\centering
\includegraphics[width=0.75\linewidth,trim=0.01cm 9.3cm 0.01cm 8.5cm,clip]{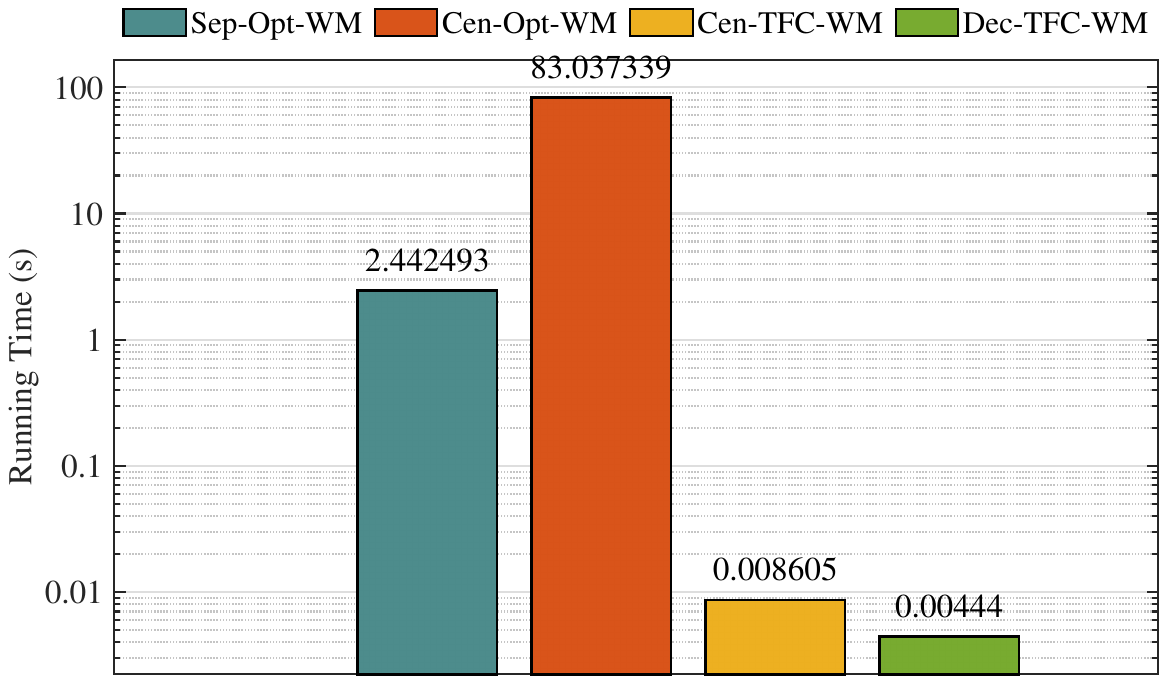}	
 \vspace{-2mm} \caption{Running time comparison.}
\label{1_runtime}	\vspace{-5mm}
	\end{figure} 
    \begin{table}[!t]
\centering
\caption{Computational Complexity Order Comparison.}
\label{complexity}
\resizebox{0.7\columnwidth}{!}{%
\begin{tabular}{lc}
\toprule
\textbf{Schemes} & \textbf{Complexity}\\
\midrule
Sep-Opt-WM 
& $\mathcal{O}\!\left(I_{\rm max}^{\rm out} I_{\rm max}^{\rm in}K M^3\right)$ in parallel \\
Cen-Opt-WM 
& $\mathcal{O}\!\left(I_{\rm max}^{\rm out} I_{\rm max}^{\rm in}SK M^3 \right)$ \\
Cen-TFC-WM
& $\mathcal{O}\!\left(SK M^3\right)$ \\
Dec-TFC-WM 
& $\mathcal{O}\!\left(KM^3\right)$ in parallel \\
\bottomrule
\end{tabular}}
\end{table}    

In \tabref{complexity}, we summarize the computational complexity of the four WMMSE-based schemes. The complexities of Sep-MRT and Sep-MMSE are omitted since they are significantly lower than those of the WMMSE variants. The optimization-based baselines require iterative updates with both outer \(I_{\rm max}^{\rm out}\) and inner \(I_{\rm max}^{\rm in}\) loops, leading to \(\mathcal{O}\left(I_{\rm max}^{\rm out} I_{\rm max}^{\rm in} K M^3\right)\) for Sep-Opt-WM and \(\mathcal{O}\left(I_{\rm max}^{\rm out} I_{\rm max}^{\rm in} S K M^3\right)\) for Cen-Opt-WM, where the cubic term stems from solving \(M\times M\) linear systems. In contrast, the proposed learning-based schemes only require low-complexity inference to avoid iterative optimization, resulting in \(\mathcal{O}(SKM^3)\) for Cen-TFC-WM. For Dec-TFC-WM, each satellite performs inference locally with \(\mathcal{O}(KM^3)\) complexity, and the overall complexity can be further reduced through parallel execution across satellites. \figref{1_runtime} compares the running time of several methods, where the time cost of Sep-MRT and Sep-MMSE is relatively small and thus omitted. The running time experiments are conducted on a CPU platform (Intel(R) Xeon(R) Platinum 8336C CPU @ 2.30GHz). As observed, Cen-TFC-WM and Dec-TFC-WM achieve significantly lower running time due to their non-iterative nature, even outperforming the single-satellite baseline Sep-Opt-WM. Note that running time is influenced by various factors such as low-level implementation efficiency and hardware conditions, and should therefore be seen as indicative rather than absolute.
\begin{figure}[!t]
    \centering \vspace{-6mm}
\includegraphics[width=0.65\linewidth,trim=3.5cm 9.3cm 3.5cm 9.9cm,clip]{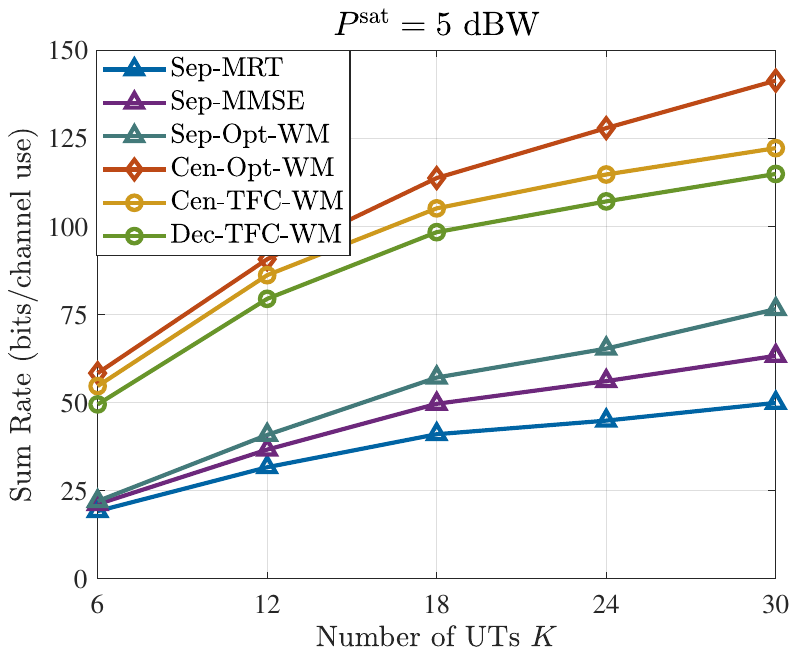} \vspace{-2mm}
    \caption{Multi-satellite sum rate vs UT number with $S=3$ and $P^{\rm sat}=5\, \rm{dBW}$, where Cen-TFC-WM and Dec-TFC-WM are trained at $K=12$.}
\label{fig:3_wsr_sat3_ut_5dbw}\vspace{-3mm}
\end{figure}
\begin{figure}[!t]
    \centering
\includegraphics[width=0.97\linewidth,trim=0.1cm 11cm 0.1cm 12cm,clip]{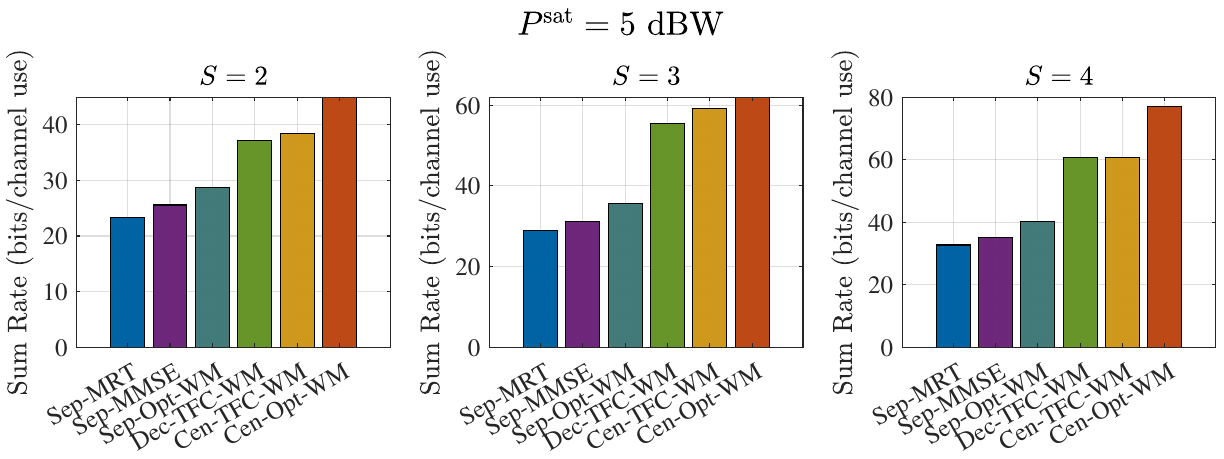}
     \vspace{-3mm}
    \caption{Multi-satellite sum rate versus satellite number with $K\!=\!12$ and $P^{\rm sat}\!=\!5\, \rm{dBW}$, where Cen-TFC-WM and Dec-TFC-WM are trained at $S\!=\!3$.}
\label{fig:2_wsr_sat_ut12_5dbw}\vspace{-3mm}
\end{figure}
\vspace{-8mm}
\subsection{Scenario Scalability}
This subsection evaluates the scenario scalability of the proposed learning-based networks under varying network scales. Specifically, \figref{fig:3_wsr_sat3_ut_5dbw} investigates the impact of increasing the number of UTs with a fixed number of satellites, showing that the proposed schemes maintain stable performance gains as the user load grows, while Cen-TFC-WM exhibits a small degradation relative to Cen-Opt-WM under heavier loads. Moreover, \figref{fig:2_wsr_sat_ut12_5dbw} examines the effect of increasing the number of cooperative satellites with a fixed number of UTs, where Cen-TFC-WM and Dec-TFC-WM consistently improve the sum rate as more satellites participate in cooperation.  Notably, although the networks are trained only under the specific setting \((S=3, K=12)\), they generalize well across a broad range of scenario configurations without retraining.
\vspace{-3mm}
\subsection{Inter-Satellite Overhead}
We summarize the inter-satellite information required by different schemes in \tabref{inter_sat_overhead}.  For Cen-Opt-WM and  Cen-TFC-WM, each satellite needs to report sCSI to the central satellite, which then computes and feeds back the precoding vectors to all cooperating satellites.  The exchanged information is reduced to slowly varying sCSI together with satellite position, attitude, and power constraints.  By contrast, Dec-TFC-WM only requires the exchange of satellite position, attitude, and power information among satellites, since the precoders are generated locally via decentralized inference, eliminating both CSI reporting and centralized precoder dissemination.
\begin{table}[!t]
\centering
\footnotesize % 或 \small / \scriptsize
\caption{Comparison of Inter-Satellite Overhead}
\label{inter_sat_overhead}
\begin{tabular}{p{1.7cm}p{6.3cm}}
\toprule
\textbf{Schemes} & \textbf{Inter-satellite exchanged information} \\
\midrule
Cen-Opt-WM
Cen-TFC-WM
&
\textit{To central satellite}:
satellite position, and attitude;
$\{\beta_{s,k}, \kappa_{s,k}, \boldsymbol{\Sigma}_{s,k}\}_{\forall s \in \mathcal{S}\setminus s, k}$, and
$\{P_s^{\rm sat}\}_{\forall s \in \mathcal{S}\setminus s}$.
\textit{From central satellite}: precoding vectors $\{\mathbf{p}_{s,k}\}_{\forall s, \forall k}$.\\
\midrule
Dec-TFC-WM
&
\textit{Between satellites}: satellite position, attitude, and $P_s^{\rm sat}$. \\
\bottomrule
\end{tabular}
\end{table}
\begin{figure}[!t]
		\centering \vspace{-6mm}
\includegraphics[width=0.72\linewidth,trim=0.5cm 9.1cm 0.5cm 8.7cm,clip] {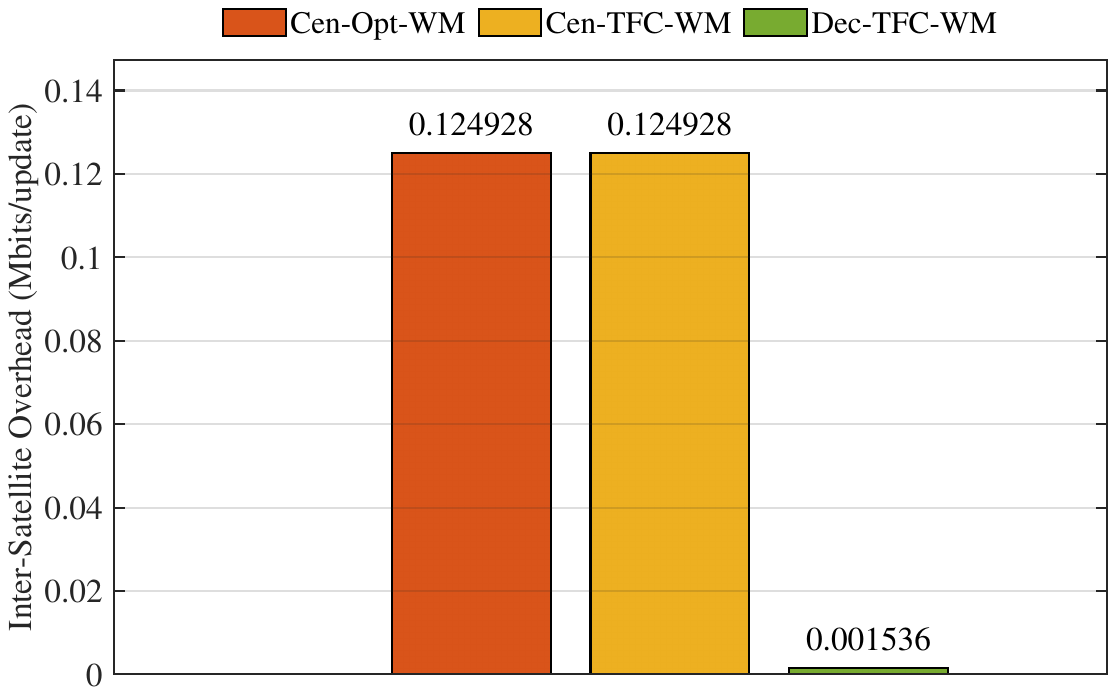}	  \vspace{-2mm}	\caption{Per-update inter-satellite overhead comparison.}
	\label{4_inter_sat_overhead}	\vspace{-5mm}
	\end{figure}
\figref{4_inter_sat_overhead} further quantifies the inter-satellite overhead per update.  Compared with Cen-Opt-WM and Cen-TFC-WM, the proposed Dec-TFC-WM achieves much lower overhead. The reduction stems from removing the need for CSI upload and centralized precoder delivery in Dec-TFC-WM. 
Beyond the per-update cost, the average inter-satellite overhead is determined by both the overhead per update and the update frequency: compared to frequent updates due to iCSI, our proposed schemes can update less often with slowly varying parameters. Specifically, Cen-Opt-WM and Cen-TFC-WM are based on sCSI, and Dec-TFC-WM permits the longest update period since it relies mainly on ephemeris and attitude information. Consequently, the proposed schemes greatly improve inter-satellite communication efficiency and are more practical for multi-satellite massive MIMO transmission.
\vspace{-2mm}\section{Conclusion}
\label{conclusion}
This paper investigated efficient architectures for multi-satellite massive MIMO transmission. \textcolor{black}{We study the WSR maximization in a multi-satellite system where multiple satellites transmit independent data streams to multi-antenna UTs, thereby enhancing system throughput.} By reformulating the WSR maximization problem as a multi-satellite WMMSE problem, closed-form updates for the precoders and receivers were derived. To reduce the complexity of iterative optimization, a centralized learning-based design based on TFC-Net was proposed, achieving strong performance with low-complexity online inference. To further reduce inter-satellite overhead, a decentralized learning-based WMMSE scheme was developed, where each satellite locally infers its precoders using only periodically exchanged SSI and UT GNSS. Numerical results demonstrated that the proposed multi-satellite schemes significantly outperform the single-satellite scheme, while the decentralized scheme achieves near-centralized performance with much lower ISL overhead, and the learning-based schemes exhibit strong robustness and scalability.

	% if have a single appendix:
	%\appendix[Proof of the Zonklar Equations]
	% or
	%\appendix  % for no appendix heading
	% do not use \section anymore after \appendix, only \section*
	% is possibly needed
	
	% use appendices with more than one appendix
	% then use \section to start each appendix
	% you must declare a \section before using any
	% \subsection or using \label (\appendices by itself
	% starts a section numbered zero.)
	%

	% \section*{Acknowledgment}
	
	%The authors would like to thank Linfeng Song for his valuable suggestions.
	
%\newpage
\appendices

	% Can use something like this to put references on a page
	% by themselves when using endfloat and the captionsoff option.
	\ifCLASSOPTIONcaptionsoff
	\newpage
	\fi
	\bibliographystyle{IEEEtran}
	\bibliography{IEEEfull}

\end{document}